\documentclass[notitlepage, longbibliography, superscriptaddress,twocolumn]{revtex4-1}
\usepackage[utf8]{inputenc}
\usepackage{amsmath, amsthm, amsfonts, amssymb, bm, bbm}
\usepackage{hyperref, cleveref}
\usepackage{xcolor}
\usepackage[english]{babel}
\usepackage{comment}

\newtheorem{theorem}{Theorem}

\newtheorem{corollary}{Corollary}

\newtheorem{lemma}{Lemma}

\DeclareMathOperator{\ad}{Ad}

\DeclareMathOperator{\col}{col}
\DeclareMathOperator{\tr}{Tr}
\DeclareMathOperator{\rnk}{rank}

\newcommand{\bb}{\mathbb}
\newcommand{\mc}{\mathcal}

% superoperator

\newcommand{\ket}[1]{| #1 \rangle}

%from https://stackoverflow.com/a/1812477/3183467
\makeatletter
\def\ketbra#1{\def\tempa{#1}\futurelet\next\ketbra@i}% Save first argument
\def\ketbra@i{\ifx\next\bgroup\expandafter\ketbra@ii\else\expandafter\ketbra@end\fi}%Check brace
\def\ketbra@ii#1{\left| \tempa \middle\rangle\!\middle\langle #1 \right|}%Two args
\def\ketbra@end{\left| \tempa \middle\rangle\!\middle\langle \tempa \right|}%Single args
\makeatother

\makeatletter
\def\braket#1{\def\tempa{#1}\futurelet\next\braket@i}% Save first argument
\def\braket@i{\ifx\next\bgroup\expandafter\braket@ii\else\expandafter\braket@end\fi}%Check brace
\def\braket@ii#1{\left\langle \tempa \middle| #1 \right\rangle}%Two args
\def\braket@end{\left\langle \tempa \middle| \tempa \right\rangle}%Single args
\makeatother

\newcommand{\dbra}[1]{\ensuremath{\left\langle\!\langle #1\right|}}
\newcommand{\dket}[1]{\ensuremath{\left|#1\rangle\!\right\rangle}}
\newcommand{\cket}[1]{\ensuremath{\left|#1\right)}}
\newcommand{\cketbra}[1]{\ensuremath{\left|#1\right)\!\left(#1\right|}}

%from https://stackoverflow.com/a/1812477/3183467
\makeatletter
\def\dketbra#1{\def\tempa{#1}\futurelet\next\dketbra@i}% Save first argument
\def\dketbra@i{\ifx\next\bgroup\expandafter\dketbra@ii\else\expandafter\dketbra@end\fi}%Check brace
\def\dketbra@ii#1{| \tempa \rangle\!\rangle\!\langle\!\langle #1 |}%Two args
\def\dketbra@end{| \tempa \rangle\!\rangle\!\langle\!\langle \tempa |}%Single args
\makeatother

\makeatletter
\def\dbraket#1{\def\tempa{#1}\futurelet\next\dbraket@i}% Save first argument
\def\dbraket@i{\ifx\next\bgroup\expandafter\dbraket@ii\else\expandafter\dbraket@end\fi}%Check brace
\def\dbraket@ii#1{\langle\!\langle \tempa | #1 \rangle\!\rangle}%Two args
\def\dbraket@end{\langle\!\langle \tempa | \tempa \rangle\!\rangle}%Single args
\makeatother

\makeatletter
\def\cketbra#1{\def\tempa{#1}\futurelet\next\cketbra@i}% Save first argument
\def\cketbra@i{\ifx\next\bgroup\expandafter\cketbra@ii\else\expandafter\cketbra@end\fi}%Check brace
\def\cketbra@ii#1{\left| \tempa \middle)\!( #1 \right|}%Two args
\def\cketbra@end{\left| \tempa \middle)\!\middle( \tempa \right|}%Single args
\makeatother

\makeatletter
\def\cbraket#1{\def\tempa{#1}\futurelet\next\cbraket@i}% Save first argument
\def\cbraket@i{\ifx\next\bgroup\expandafter\cbraket@ii\else\expandafter\cbraket@end\fi}%Check brace
\def\cbraket@ii#1{\left( \tempa \middle| #1 \right)}%Two args
\def\cbraket@end{\left( \tempa \middle| \tempa \right)}%Single args
\makeatother

\begin{document}

\title{Stochastic errors in quantum instruments}
\author{Darian McLaren}
\affiliation{Institute for Quantum Computing, University of Waterloo, Waterloo, Ontario N2L 3G1, Canada}
\affiliation{Department of Applied Mathematics, University of Waterloo, Waterloo, Ontario N2L 3G1, Canada}
\author{Matthew A. Graydon}
\affiliation{Institute for Quantum Computing, University of Waterloo, Waterloo, Ontario N2L 3G1, Canada}
\affiliation{Department of Applied Mathematics, University of Waterloo, Waterloo, Ontario N2L 3G1, Canada}
\author{Joel J. Wallman}
\affiliation{Institute for Quantum Computing, University of Waterloo, Waterloo, Ontario N2L 3G1, Canada}
\affiliation{Department of Applied Mathematics, University of Waterloo, Waterloo, Ontario N2L 3G1, Canada}
\affiliation{Keysight Technologies Canada, Kanata, ON K2K 2W5, Canada}

\begin{abstract}
Fault-tolerant quantum computation requires non-destructive quantum measurements with classical feed-forward.
Many experimental groups are actively working towards implementing such capabilities and so they need to be accurately evaluated.
As with unitary channels, an arbitrary imperfect implementation of a quantum instrument is difficult to analyze.
In this paper, we define a class of quantum instruments that correspond to stochastic errors and thus are amenable to standard analysis methods. 
We derive efficiently computable upper- and lower-bounds on the diamond distance between two quantum instruments.
Furthermore, we show that, for the special case of uniform stochastic instruments, the diamond distance and the natural generalization of the process infidelity to quantum instruments coincide and are equal to a well-defined probability of an error occurring during the measurement.
\end{abstract}

\date{\today}

\maketitle
\section{Introduction}
Quantum technologies are presently prone to noise \cite{preskill2018quantum}; consequently, quantum computations are ridden with errors. Noise thus limits the integrity of quantum computers, since quantum protocols drafted for ideal conditions fail. Imperfect implementations of ideal quantum processes are, however, far from fatal. Fault-tolerant architectures  \cite{FTShor,FTPreskill,aharonov1997fault,knill1998resilient,kitaev2003fault} could perfect quantum information processing \cite{MI,watrous2018theory}. 
%Surface codes \cite{dennis2002topological, raussendorf2007fault, fowler2012surface, fowler2012surface,vasmer2019three,google2023suppressing}, in particular, represent a pantheon to realize quantum devices immunized against noise \cite{cao2021nisq}.
Quantum error correction and noise mitigation protocols \cite{gottesman1997stabilizer,temme2017error,endo2018practical,mcardle2019error,maciejewski2020mitigation,koczor2021exponential}, in general, are subject to practical imperfections. 

It is therefore crucial to characterize the quality of the components of experimental fault-tolerant architectures: quantum gates and quantum measurements. The latter, conspicuously, have been studied far less than the former in the realms of quantum certification and validation \cite{eisert2020quantum}. Indeed, there exist a host of randomized benchmarking protocols (with \cite{emerson2005scalable,knill2008randomized,dankert2009exact,magesan2011scalable,magesan2012efficient,francca2018approximate,erhard2019characterizing,helsen2019new,RB,proctor2019direct,merkel2021randomized} a representative list of references) to efficiently characterize the quality of quantum gates. Quantum measurements await a similar suite of efficient benchmarking protocols to validate their applications in large-scale fault-tolerant quantum computing schemes. A significant obstacle has been to pinpoint canonical figures of merit for implementations of quantum measurements, such as the non-destructive quantum measurements with classical feed-forward applied in surface codes. Indeed, such figures of merit are necessary prerequisites for any quality characterization protocols applicable to fault-tolerant quantum computers.

In this paper, we establish that the current canonical figures of merit for implementations of quantum gates (namely, their process infidelities and diamond distances) capture the quality of quantum measurements as well. An arbitrary implementation of a quantum measurement can be tailored to a uniform stochastic implementation \cite{beale2023randomized}, which is similar to the fact that generic gate noise can be tailored to a stochastic channel \cite{Graydon2022}. We thus set our analysis within a framework for quantum instruments \cite{davies1970operational,heinosaari2011mathematical,wilde2013quantum} and we focus on uniform stochastic instruments.

We prove that a uniform stochastic implementation of a quantum instrument is such that its diamond distance to the ideal L{\"u}ders instrument equals its process infidelity (\cref{thm:stochasticDiamondDistance}.) In the general case, we consider arbitrary implementations and derive efficiently computable lower and upper bounds on the diamond distance (\cref{thm:GeneralInstrumentDistance}.) We show that uniform stochastic instruments saturate the lower bound; moreover, their diamond distance is just the probability that an error occurs during measurement. The formal quality of a uniform stochastic instrument thus admits a surprisingly simple operational description. Our proofs hinge on several complementary observations. In particular, we generalize a main theorem from \cite{Graydon2022} to the case of unnormalized stochastic channels (\cref{thm:unnormalizedStochastic}) and we derive consequences of a main theorem from \cite{mclaren2022} (\cref{cor:nonuniform} and \cref{cor:uniform}) to prove our main results. We also establish an improved generalized lower bound for the famous Fuchs-van de Graff inequalities (\cref{thm:Fuchs})

Our results herein set the mathematical backbone for efficient protocols to characterize quantum instruments. We pinpoint the process infidelity and diamond distance as canonical figures of merit for noisy implementations of quantum measurements. In a sequel \cite{Mahmoud2023}, we expound a randomized benchmarking protocol to efficiently extract the process infidelity. We remark that quantum process tomography requires resources that scale exponentially with system size.  We note that R.\ Stricker \textit{et al}.~\cite{stricker2022characterizing} recently proposed a method to characterize the quality of noisy quantum instruments via process tomography.

The balance of this paper is organized as follows. In \cref{sec:preliminaries}, we set our notation and review the uniform stochastic instruments defined in \cite{beale2023randomized}. In \cref{sec:diamonds}, we consider unnormalized (\textit{i.e}.\ generally trace decreasing) stochastic channels and generalize a main theorem in \cite{Graydon2022}.
In \cref{sec:InstrumentFidelity}, we show that the process fidelity of a uniform stochastic instrument to its ideal implementation is simply the probability of implementing the ideal instrument.
In \cref{sec:instDiamonds}, we establish upper and lower bounds on the diamond distance between any implementation of a quantum measurement and its ideal L{\"u}ders form. Next, in \cref{sec:stochDiamonds}, we derive a closed form exact expression for the diamond distance of a stochastic implementation of a quantum measurement and its ideal L{\"u}ders form. We conclude this paper in \cref{sec:conclusion} and point out avenues for future research. In \cref{sec:Fuchs}, we prove \cref{thm:Fuchs}.

\section{Preliminaries}
\label{sec:preliminaries}
We will follow the notation of \cite{beale2023randomized} closely and so briefly review the relevant notation.

\subsection{Quantum states}

Let $\bb{H}_{D,E}$ be the Hilbert space of bounded linear operators from $\bb{C}^D$ to $\bb{C}^E$, where we will use the shorthand $\bb{H}_{D} = \bb{H}_{D, D}$.
Further, let $\bb{D}_D \subset \bb{H}_D$ be the space of density operators.
A quantum state can then be represented either as a density matrix, $\rho \in \bb{D}_D$, or, in vectorized form as $\col{\rho} \in \bb{C}^{D^2}$.
The particular choice of vectorization map is unimportant for a fixed state space provided it is consistent and satisfies $\dbraket{A}{B} = \tr A^\dagger B$ where $\dbra{A} = \dket{A}^\dagger$.
However, as we work with different state spaces when using the Choi-Jamio{\l}kowski isomorphism~\cite{CHOI1975285,JAMIOLKOWSKI1972275}, we choose $\dket{A} = \col(A)$ where $\col$ denotes the column stacking map.
The column stacking map satisfies the vectorization identities
\begin{align}\label{eq:VectorizationIdentities}
    \col(ABC) &=(C^T\otimes A)\col(B) \notag\\
    \tr(A^\dag B) &=\col(A)\col^\dagger(B).
\end{align}

As we are primarily concerned with $n$-qudit computational basis measurements, we define $D = d^n$ and
\begin{align}
    \cket{} : \bb{Z}_D \to \bb{C}^{D^2} :: \cket{j} = \col(\ketbra{j}) = \ket{jj}.
\end{align}
We also make heavy use of the maximally entangled state
\begin{align}
    \Phi = \frac{1}{D} \sum_{j, k \in \bb{Z}_D} \ketbra{jj}{kk},
\end{align}
which, by linearity, can also be written as
\begin{align}\label{eq:PhiState}
    \Phi = \frac{1}{D} \col(I_D) \col^\dagger(I_D),
\end{align}
where $I_D \in \bb{H}_D$ is the identity map.
Note that we choose the column-stacking vectorization map as \cref{eq:PhiState} only holds for the column- and row-stacking vectorization maps.

\subsection{Quantum operations}

Quantum operations are linear maps from quantum states to quantum states.
We can write any linear map $\mc{C} : \bb{H}_D \to 
 \bb{H}_E$ as
\begin{align}\label{eq:GeneralKrausDecomposition}
    \mc{C}(\rho) = \sum_j L_j \rho R_j^\dagger
\end{align}
where $L_j, R_j \in \bb{H}_{D, E}$.
In superoperator form, we have
\begin{align}
    \mc{C} = \sum_j \ad_{L_j, R_j}.
\end{align}
We are primarily interested in completely positive maps, for which we can choose $R_j = L_j$ for all $j$~\cite{CHOI1975285} and so we use the short-hand $\ad_L = \ad_{L,L}$.
We can also represent a linear map using the Choi state
\begin{align}
    \mc{J}(\mc{C}) = (\ad_{I_d} \otimes \mc{C})(\Phi),
\end{align}
which, using \cref{eq:PhiState,eq:VectorizationIdentities}, can be re-written as
\begin{align}\label{eq:ChoiState}
    \mc{J}(\mc{C}) = \frac{1}{D} \sum_j \col(L_j) \col^\dagger(R_j).
\end{align}
As the decomposition of a linear map as in \cref{eq:GeneralKrausDecomposition} is not unique, we define the Kraus rank $\kappa$ to be minimum number of Kraus operators necessary in the decomposition.
One can prove that
\begin{align}
    \kappa(\mc{C}) = \rnk J(\mc{C})
\end{align}
where $\rnk(M)$ denotes the matrix rank of a matrix $M$, that is, the number of nonzero singular values of $M$.

\subsection{Quantum instruments}

A quantum instrument is a special type of quantum channel that may be used to represent a measurement process on a quantum system. 
The action of a quantum instrument on a state $\rho$ can be written as
\begin{align}\label{eq:GeneralInstrument}
    \mc{M}(\rho) = \sum_j \pi_j \rho \pi_j \otimes \ketbra{j}
\end{align}
for some set of orthogonal projectors $\{\pi_j\}$ that sum to the identity, where $\ketbra{j}$ represents the observed outcome.
That is, $\mc{M}$ is a quantum channel with Kraus operators $\pi_j \otimes \ket{j}$.
In the superoperator representation, we have
\begin{align}
    \mc{M} = \sum_j \ad_{\pi_j} \otimes \cket{j}.
\end{align}
An imperfect implementation of an instrument $\mc{M}$ can be written as
\begin{align}\label{eq:GeneralSubsystemMeasurementError}
    \Theta(\mc{M}) = \sum_j \mc{M}_j \otimes \cket{j},
\end{align}
where the $\mc{M}_j$ are completely positive maps.
The probability $p(j|\Theta(\mc{M}))$ of obtaining an outcome $j$ when a system in the state $\rho$ is measured using the instrument $\Theta(\mc{M})$ is determined by Born's rule,
\begin{align}\label{eq:BornRule}
p(j|\Theta(\mc{M})) = \tr \mc{M}_j(\rho).
\end{align}
We are particularly interested in subsystem measurements, wherein only a subset of qudits are measured with a rank 1 measurement~\cite{beale2023randomized}.
Without loss of generality we assume that the unmeasured qudits are idle and so take $\pi_j = I_E \otimes \ketbra{j}$ where $\bb{H}_E$ is the state space of the unmeasured qudits.
That is, we define a subsystem measurement to be a quantum instrument of the form
\begin{align}\label{eq:SubsystemMeasurement}
\mc{M} = \sum_{j \in \bb{Z}_D} \ad_{I_E} \otimes \cketbra{j} \otimes \cket{j}.
\end{align}
A general error model for a subsystem measurement is still in the form of \cref{eq:GeneralSubsystemMeasurementError}, that is, a general error model may entangle the state of the measured and unmeasured qudits.
However, as with unitary channels, we would like to restrict to stochastic errors in subsystem measurements because they are easier to analyze.
There are three types of stochastic errors that could occur during an implementation of a subsystem measurement:
\begin{enumerate}
    \item reporting the wrong outcome;
    \item flipping the state of the measured qudit after the measurement; and
    \item applying a stochastic error to the unmeasured qudits~\cite{Graydon2022}.
\end{enumerate}
In general, the errors applied could be correlated with the observed measurement outcome, which substantially complicates the analysis.
Fortunately, these correlations can be removed by a recently introduced generalization of randomized compiling~\cite{beale2023randomized}, which tailors a generic implementation described by \cref{eq:GeneralSubsystemMeasurementError} to a uniform stochastic implementation described by the simplified channel
\begin{align}\label{eq:UniformStochasticImplementation}
    \Theta(\mc{M}) = \sum_{a, b, j \in \bb{Z}_D} \mc{T}_{a, b} \otimes \cketbra{j + a}{j + b} \otimes \cket{j},
\end{align}
where each $\mc{T}_{a, b}$ is an unnormalized (i.e., generally trace-decreasing) stochastic channel.
One of the key features of \cref{eq:UniformStochasticImplementation} is that the error applied to the unmeasured qubits is indepenent of $j$, although it may depend upon the error in the measurement itself, that is, on the values of $a$ and $b$.
We will show that uniform stochastic channels generalize stochastic channels in that there is a closed form expression for the diamond distance that does not involve any maximization or minimization.
For this, it will be convenient to write
\begin{align}\label{eq:Normalization}
    \mc{T}_{a, b} = \nu_{a, b} \mc{T}'_{a, b}
\end{align}
where $\nu : \bb{Z}_D^2 \to [0, 1]$ is a probability distribution and $\mc{T}'_{a, b}$ is a trace-preserving map, that is, a normalized stochastic channel. If we lift the restriction of uniformity, then the stochastic errors applied to the unmeasured qudits depend on the measurement outcome and we have
\begin{align}\label{eq:StochasticImplementation}
    \Theta(\mc{M})=\sum_{a, b, j \in \bb{Z}_D} \mc{T}_{a, b,j} \otimes \cketbra{j + a}{j + b} \otimes \cket{j}\text{,}
\end{align}
where, as above, we can write each $\mc{T}_{a,b,j}=\nu_{a,b,j}\mc{T}'_{a,b,j}$ such that $\nu:\mathbb{Z}_{D}^{3}\to[0,1]$ a probability distribution and each $\mc{T}'_{a,b,j}$ is a normalized stochastic channel.

\section{Diamond distances for unnormalized stochastic channels}
\label{sec:diamonds}
We now obtain an expression for the diamond distance between an unnormalized stochastic channel, and the identity channel, generalizing the results of \cite{Graydon2022}.

Recall that a square root of an operator $A$ is an operator $B$ such that $A = B^2$ and that if $A$ is positive semi-definite, then it has a unique positive semi-definite square root, referred to as the square root.
Recall that the trace norm of an operator $A$ is defined to be
\begin{align}\label{eq:TraceNorm}
    \| A \|_1 = \tr \sqrt{A A^\dagger},
\end{align}
and that the trace distance between two operators $A$ and $B$ is the trace norm of their difference.
We will frequently use the following two properties of the trace norm.
For any two operators $A$ and $B$,
\begin{align}\label{eq:NormTensorProduct}
    \| A \otimes B \|_1 = \| A \|_1 \| B \|_1,
\end{align}
and for any positive operator $A$, we have
\begin{align}\label{eq:NormPostiveOperator}
    \| A \|_1 = \tr A.    
\end{align}
The diamond norm of a Hermiticity-preserving map is
\begin{align}\label{eq:diamond}
    \|\mc{A}\|_\diamond= \max_{\rho \in \bb{H}_D} \| \mc{I}_d \otimes \mc{A}(\rho)\|_1,
\end{align}
where the restriction to density operators is only valid for Hermiticity-preserving maps~\cite{watrous2018theory}.
The diamond distance is then the natural distance induced by the diamond norm. 
As we focus on comparing a channel to the identity in this section, we define the diamond distance of an arbitrary channel $\mc{A}$ from the identity to be
\begin{align}
    r_\diamond(\mc{A}) = \frac{1}{2}\|\mc{A}-\mc{I}_d\|_\diamond.
\end{align}

Before proceeding, we need the following lemma.

\begin{lemma}\label{lem:orthogonality}
    For any set of orthogonal Hermitian operators $\{M_j\}$, we have
    \begin{align*}
        \|\sum_j M_j \|_1 = \sum_j \| M_j \|_1.
    \end{align*}
\end{lemma}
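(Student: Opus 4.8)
The plan is to read ``orthogonal'' in the strong sense of mutually orthogonal supports, i.e. $M_i M_j = 0$ for all $i \neq j$; for Hermitian operators this is equivalent to the ranges of distinct $M_j$ being orthogonal subspaces. This stronger reading is necessary: Hilbert--Schmidt orthogonality alone does not suffice, as the Pauli operators $X$ and $Z$ satisfy $\tr(XZ) = 0$ yet $\|X+Z\|_1 = 2\sqrt{2} \neq \|X\|_1 + \|Z\|_1$. Granting this, the upper bound $\|\sum_j M_j\|_1 \le \sum_j \|M_j\|_1$ is just the triangle inequality, so the content is the reverse inequality, which I would obtain by computing the absolute value operator $|\sum_j M_j|$ exactly.

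First I would record the two consequences of $M_i M_j = 0$ that drive the argument. Since each $M_j$ is Hermitian, $\ker(M_i) = \mathrm{range}(M_i)^\perp$, so $M_i M_j = 0$ forces $\mathrm{range}(M_j) \perp \mathrm{range}(M_i)$; because the positive operator $|M_j| = \sqrt{M_j^2}$ has the same support as $M_j$, we also get $|M_i|\,|M_j| = 0$ for $i \neq j$. Consequently all cross terms vanish in the two expansions
\begin{align*}
    \Bigl(\sum_j M_j\Bigr)^2 = \sum_j M_j^2, \qquad \Bigl(\sum_j |M_j|\Bigr)^2 = \sum_j |M_j|^2 = \sum_j M_j^2 .
\end{align*}

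The key step is then to identify the square root. The operator $\sum_j |M_j|$ is positive semi-definite and, by the second identity above, its square equals $(\sum_j M_j)^2$. By uniqueness of the positive semi-definite square root, it must therefore be the absolute value of the sum,
\begin{align*}
    \Bigl|\sum_j M_j\Bigr| = \sqrt{\Bigl(\sum_j M_j\Bigr)^2} = \sum_j |M_j| ,
\end{align*}
and taking the trace together with $\|A\|_1 = \tr|A|$ yields $\|\sum_j M_j\|_1 = \sum_j \tr|M_j| = \sum_j \|M_j\|_1$, as claimed.

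I expect the only real obstacle to be the justification that the cross terms $|M_i|\,|M_j|$ vanish, i.e. that support-orthogonality of the Hermitian $M_j$ is inherited by their absolute values; this reduces to the standard fact that a positive semi-definite operator and its square root share the same support, which I would invoke directly. Everything else is bookkeeping with the square-root and trace-norm identities recalled just before the lemma.
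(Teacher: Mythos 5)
Your proof is correct and takes essentially the same route as the paper's: orthogonality kills the cross terms so that $\bigl(\sum_j M_j\bigr)^2 = \sum_j M_j^2$, uniqueness of the positive semi-definite square root gives $\bigl|\sum_j M_j\bigr| = \sum_j |M_j|$, and taking the trace finishes. Your explicit disambiguation of ``orthogonal'' as $M_i M_j = 0$ (with the $X$, $Z$ counterexample ruling out mere Hilbert--Schmidt orthogonality) usefully pins down a reading the paper leaves implicit, and it matches how the lemma is actually applied there.
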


\begin{proof}
Let $M = \sum_j M_j$ and note that $M M^\dagger = \sum_j M_j M_j^\dagger$ by assumption.
Now note that the $M_j M_j^\dagger$ are positive semidefinite operators and that the square roots of orthogonal positive semidefinite operators are also orthogonal.
Therefore we have
\begin{align*}
    \left(\sum_j \sqrt{M_j M_j^\dagger}\right)^2 
    &= \sum_{j, k} \sqrt{M_j M_j^\dagger} \sqrt{M_k M_k^\dagger} \notag\\
    &= \sum_j M_j M_j^\dagger \notag\\
    &= M M^\dagger.
\end{align*}
Taking the square roots of both sides gives
\begin{align}
    \sqrt{M M^\dagger} = \sum_j \sqrt{M_j M_j^\dagger}
\end{align} 
and so the claim follows by the linearity of the trace.
\end{proof}

We now obtain the following expression for the diamond distance from the identity for an unnormalized stochastic channel.
As is conventional, we state the lower bound in terms of the process fidelity of the channel to the identity, where the process fidelity between two channels $\mc{A}$ and $\mc{B}$ is~\cite{Schumacher1996}
\begin{align}\label{eq:Fidelity}
    \mc{F}(J(\mc{A}), J(\mc{B})) = \| \sqrt{J(\mc{A})} \sqrt{J(\mc{B})} \|_1^2.
\end{align}

\begin{theorem}\label{thm:unnormalizedStochastic}
Let $\mc{T}$ be an unnormalized stochastic channel. Then 
\begin{align}
    r_{\diamond}(\mc{T})=\frac{1 + \|J(\mc{T})\|_1}{2} - \mc{F}(J(\mc{T}),J(\mc{I}_d)).
\end{align}
\end{theorem}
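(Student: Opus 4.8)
The plan is to diagonalize the Choi operator $J(\mc{T})$ in the orthonormal basis supplied by the stochastic structure, and then to trap the diamond norm between a lower bound evaluated at the maximally entangled input and a matching upper bound obtained by peeling off the identity component. Taking as the definition of an unnormalized stochastic channel that its Kraus operators are $\sqrt{p_k}\,U_k$, where $\{U_k\}$ is an orthonormal operator basis (a group of unitaries) with $U_0 = I_d$, $p_k \ge 0$ and $\sum_k p_k = \nu \le 1$, \cref{eq:ChoiState} gives $J(\mc{T}) = \tfrac1D\sum_k p_k \dketbra{U_k}$, where $\tfrac1D \dbraket{U_j}{U_k} = \tfrac1D \tr(U_j^\dagger U_k) = \delta_{jk}$. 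This orthonormality is exactly the feature I will exploit, and it identifies $J(\mc{I}_d) = \tfrac1D\dketbra{I_d}$ as the rank-one projector at $k = 0$. First I would record the two quantities on the right-hand side: since $J(\mc{T}) \ge 0$, \cref{eq:NormPostiveOperator} gives $\|J(\mc{T})\|_1 = \tr J(\mc{T}) = \nu$; and because $J(\mc{I}_d)$ is a rank-one projector, the fidelity \cref{eq:Fidelity} collapses, via $\sqrt{J(\mc{T})}\,\tfrac1D\dketbra{I_d} = \sqrt{p_0}\,\tfrac1D\dketbra{I_d}$, to $\mc{F}(J(\mc{T}),J(\mc{I}_d)) = p_0$. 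The target identity then reads $r_\diamond(\mc{T}) = \tfrac{1+\nu}{2} - p_0 = \tfrac12(1+\nu-2p_0)$.

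For the lower bound I would feed the maximally entangled state into \cref{eq:diamond}, which returns the Choi difference, $\tfrac12\|\mc{T}-\mc{I}_d\|_\diamond \ge \tfrac12\|J(\mc{T}) - J(\mc{I}_d)\|_1$. In the $\{U_k\}$ basis this difference is a sum of rank-one operators with eigenvalue $p_0 - 1$ at $k=0$ and $p_k$ for $k \ne 0$; these pieces have mutually orthogonal support, so \cref{lem:orthogonality} yields $\|J(\mc{T})-J(\mc{I}_d)\|_1 = (1-p_0) + \sum_{k\ne 0} p_k = 1 + \nu - 2p_0$, hence $r_\diamond(\mc{T}) \ge \tfrac12(1+\nu-2p_0)$.

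For the matching upper bound I would avoid any maximization by splitting off the identity channel. Writing $\mc{T} = p_0\,\mc{I}_d + \mc{E}$ with $\mc{E} = \sum_{k\ne 0} p_k \ad_{U_k}$ manifestly completely positive, for every input density operator $\sigma$ on the doubled space one has $(\mc{I}_d\otimes(\mc{T}-\mc{I}_d))(\sigma) = (p_0-1)\sigma + (\mc{I}_d\otimes\mc{E})(\sigma)$. The triangle inequality, together with $\|\sigma\|_1 = 1$ and the fact that $(\mc{I}_d\otimes\mc{E})(\sigma)$ is positive — so by \cref{eq:NormPostiveOperator} its trace norm equals its trace, which is $\sum_{k\ne 0} p_k = \nu - p_0$ by unitary invariance of the trace — then bounds the norm by $1 + \nu - 2p_0$ uniformly in $\sigma$. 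This gives $r_\diamond(\mc{T}) \le \tfrac12(1+\nu-2p_0)$, and the two bounds coincide.

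The step I expect to be most delicate is confirming that the upper and lower bounds actually meet, that is, that the maximally entangled input is optimal even for the \emph{trace-decreasing} map $\mc{T}-\mc{I}_d$; this is precisely the point at which one must move beyond the trace-preserving setting of \cite{Graydon2022}. My route sidesteps any covariance or twirling argument, but its tightness hinges on the triangle inequality being saturated at $\Phi$, which is in turn the orthogonal-support structure captured by \cref{lem:orthogonality} — so the positivity of $\mc{E}$ and the Bell-diagonality of the Choi are doing the real work. The remaining bookkeeping, namely re-expressing $\nu$ as $\|J(\mc{T})\|_1$ and $p_0$ as $\mc{F}(J(\mc{T}),J(\mc{I}_d))$, is routine.
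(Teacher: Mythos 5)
Your proof takes essentially the same route as the paper's: a lower bound obtained by evaluating the diamond norm at the maximally entangled state, so that it reduces to $\|J(\mc{T})-J(\mc{I}_d)\|_1$, which is computed exactly via \cref{lem:orthogonality} because the Choi difference splits into rank-one pieces with mutually orthogonal supports; and a matching upper bound obtained by peeling off the identity component and applying the triangle inequality together with $\|A\|_1=\tr A$ for positive $A$. Both bounds, and the identifications $\|J(\mc{T})\|_1=\nu$ and $\mc{F}(J(\mc{T}),J(\mc{I}_d))=p_0$, are correct as you state them, and your closing worry about whether the bounds meet is unfounded by construction: the upper bound is uniform in the input state and numerically equal to the lower bound evaluated at $\Phi$.

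The one substantive caveat is your definition of an unnormalized stochastic channel. You assume Kraus operators $\sqrt{p_k}\,U_k$ with $\{U_k\}$ an orthonormal \emph{group of unitaries}, i.e.\ a mixed-unitary (Pauli-type) channel. The class the theorem addresses (following \cite{Graydon2022}) is strictly larger: the Kraus operators $\{B_k\}$ are only required to be Hilbert--Schmidt orthogonal, $\tr(B_k^\dagger B_l)=\delta_{kl}\tr(B_k^\dagger B_k)$, with $B_0=\sqrt{\nu\lambda}\,I_d$; the $B_k$ with $k\neq 0$ need not be proportional to unitaries. As written, your argument therefore proves the theorem only for a subclass. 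The repair is routine, because each place you invoke unitarity has a weaker substitute: the orthogonality of the vectors $\col(B_k)$ is all that \cref{lem:orthogonality} needs for the lower bound (the rank-one pieces then carry weights $\tr(B_k^\dagger B_k)/d$ in place of $p_k$); the tracelessness of $U_k$ for $k\neq 0$, which you use to get $\mc{F}(J(\mc{T}),J(\mc{I}_d))=p_0$, follows for general $B_k$ from their orthogonality to $B_0\propto I_d$; and the step ``$\tr\bigl((\mc{I}_d\otimes\mc{E})(\sigma)\bigr)=\nu-p_0$ by unitary invariance of the trace'' is replaced by the trace-preservation condition $\sum_k B_k^\dagger B_k=\nu I_d$, which gives $\sum_{k\neq 0}B_k^\dagger B_k=\nu(1-\lambda)I_d$ and hence the same constant trace for every input $\sigma$. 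With these substitutions your proof and the paper's coincide, the paper's upper bound being the same triangle-inequality estimate, there phrased as a supremum of $1-\tr\ad_{B_0\otimes I_d}(\rho)$, which is constant precisely because $B_0\propto I_d$.
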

\begin{proof}
    To begin, we let $\mc{T}=\nu\mc{T}'$ for a normalized stochastic channel $\mc{T}$. Note that as $\mc{T}'$ is a quantum channel it admits a Kraus decomposition with Kraus operators $\{B_k\}_{k\in K}$, where as $\mc{T}'$ is a stochastic channel, we may assume that $B_0=\sqrt{\lambda} I_d$ from some $\lambda\in[0,1]$. 
    We then proceed by first showing that
    \begin{align}\label{eq:Up}
        \|\mc{T}-\mc{I}_d\|_\diamond\leq 1 + \nu - 2\nu\lambda.
    \end{align}

    Using the triangle inequality combined with the homogeneity of the diamond norm, and $\|I_d\|_\diamond=1$ we obtain
    \begin{align}\label{eq:UpperBound}
        \|\mc{T}-\mc{I}_d\|_\diamond&=\|(\nu\lambda - 1)I_d - \nu\sum_{k\neq 0} \ad_{B_k}\|_\diamond \notag\\
        &\leq1-\nu\lambda + \nu\|\sum_{k\neq 0} \ad_{B_k}\|_\diamond.
    \end{align}
    Applying the definition of the diamond norm, and the triangle inequality we have
    \begin{align}
        \|\sum_{k\neq 0} \ad_{B_k}\|_\diamond&\leq\sup_{\rho}\sum_{k\neq 0}\| \ad_{B_k\otimes I_d}(\rho)\|_1\notag\\
        &=\sup_{\rho}\sum_{k\neq 0} \tr\ad_{B_k\otimes I_d}(\rho)\notag\\
        &=\sup_{\rho}1- \tr\ad_{B_0\otimes I_d}(\rho)\notag\\
        &=1-\lambda,
    \end{align}
    which when substituted into \cref{eq:UpperBound} gives the desired result.
    
    We now show that
    \begin{align}\label{eq:Low}
        \|\mc{T}-\mc{I}_d\|_\diamond\geq 1 + \nu - 2\nu\lambda.
    \end{align}
To begin we use lemma 7 of \cite{RBCon} to lower bound the diamond distance in terms of the Choi matrices and then apply \cref{eq:ChoiState}
\begin{align}
    \|\mc{T} - \mc{I}_d\|_\diamond&\geq\|J(\mc{T}) - J(\mc{I}_d)\|_1\notag\\
    &=\frac{1}{d}\Big\|(1-\nu\lambda)\col(I_d)\col(I_d)^\dag \notag\\
    &\quad\quad  + \nu\sum_{k\neq 0}\col(B_k)\col(B_k)^\dag\Big\|_1 .
\end{align}
Recall that from $\mc{T}$ being a stochastic channel the Kraus operators of $\mc{T}$ satisfy the following orthogonality condition
\begin{align}
    \tr(B_k^\dag B_l)=\delta_{kl}\tr(B_k^\dag B_k).
\end{align}
Consequently, the operators $\{\col(B_k) \col^\dagger(B_k)\}_{k\in K}$ are mutually orthogonal hermitian operators. 
Therefore applying \cref{lem:orthogonality} we have

\begin{align}
    \|\mc{T} - \mc{I}_d\|_\diamond&\geq1-\nu\lambda + \frac{\nu}{d}\sum_{k\neq 0}\col(B_k)^\dagger \col(B_k)\notag\\
    &=1-\nu\lambda + \frac{\nu}{d}\sum_{k\neq 0}\col(B_k)^\dagger \col(B_k)\notag\\
    &=1-\nu\lambda + \frac{\nu}{d}(d - \col(B_0)^\dag\col(B_0)),
\end{align}
and the result follows from the assumption $B_0=\sqrt{\lambda}I_d$

From equations \ref{eq:Up} and \ref{eq:Low} we have
\begin{align}
    \|\mc{T}-\mc{I}_d\|_\diamond= 1 + \nu - 2\nu\lambda.
\end{align}
Hence we complete the proof by noting that
\begin{align}
    \mc{F}(J(\mc{T}),J(\mc{I}_d)) = \nu\lambda,
\end{align}
and
\begin{align}
    \|J(\mc{T})\|_1=\nu.
\end{align}
\end{proof}

\section{Process infidelity of quantum instruments}\label{sec:InstrumentFidelity}

The diamond distance between a stochastic channel and the identity is given by the process fidelity of a stochastic channel to the identity~\cite{Graydon2022}.
We will prove that this statement generalizes to uniform stochastic implementations of subsystem measurements, and also demonstrate that it does not generalize to arbitrary stochastic implementations of subsystem measurements.
Towards this goal, we now compute the process fidelity for stochastic implementations of subsystem measurements using the following lemmas~\cite{mclaren2022}.

We remark that $\nu_{0, 0}$ is the probability that the effect of the measurement on the measured system is perfect, that is, that the correct state is detected and that the system is left in the correct state.
Furthermore, $\lambda_{0, 0}$ is the probability that no error affects the unmeasured system conditioned on the ideal measurement being performed on the measured system.
Therefore \cref{cor:uniform} establishes that the fidelity between $J(\mc{M})$ and $J(\Theta(\mc{M}))$ is simply the joint probability of no errors occurring.

\begin{lemma}\label{lem:fidelSums}
    Let $\mc{A}=\sum_j \mc{A}_j \otimes \cket{j}$ and $\mc{B}=\sum_j \mc{B}_j \otimes \cket{j}$ be quantum instruments then
    \begin{align}
        \sqrt{\mc{F}(J(\mc{M}),J(\Theta(\mc{M})))} = \sum_j \sqrt{\mc{F}(J(\mc{A}_j),J(\mc{B}_j))}
    \end{align}
\end{lemma}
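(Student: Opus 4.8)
The plan is to exploit the block structure that the classical output registers $\cket{j}$ impose on the Choi state, reducing the instrument fidelity to a sum over measurement outcomes. (I read the left-hand side of the statement as $\mc{F}(J(\mc{A}),J(\mc{B}))$ for the two generic instruments introduced in the lemma.) First I would unpack the Choi state of an instrument $\mc{A}=\sum_j\mc{A}_j\otimes\cket{j}$. In the density-operator picture its action is $\mc{A}(\rho)=\sum_j\mc{A}_j(\rho)\otimes\ketbra{j}$, so applying the Choi map $J(\mc{A})=(\ad_{I_d}\otimes\mc{A})(\Phi)$ lets the classical register factor out of each term, giving
\begin{align}
J(\mc{A}) = \sum_j J(\mc{A}_j) \otimes \ketbra{j},
\end{align}
and likewise $J(\mc{B})=\sum_j J(\mc{B}_j)\otimes\ketbra{j}$. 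Since each $\mc{A}_j$ is completely positive, every $J(\mc{A}_j)$ is positive semidefinite, and because the outcome projectors $\{\ketbra{j}\}$ are mutually orthogonal, both Choi states are block diagonal with respect to the classical register.

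Second, the orthogonality of the blocks means the positive square root distributes over the sum, so $\sqrt{J(\mc{A})}=\sum_j\sqrt{J(\mc{A}_j)}\otimes\ketbra{j}$ and similarly for $\mc{B}$. Multiplying the two square roots and using $\ketbra{j}\,\ketbra{k}=\delta_{jk}\ketbra{j}$ to annihilate the off-diagonal terms gives
\begin{align}
\sqrt{J(\mc{A})}\,\sqrt{J(\mc{B})} = \sum_j \Big(\sqrt{J(\mc{A}_j)}\,\sqrt{J(\mc{B}_j)}\Big)\otimes\ketbra{j}.
\end{align}
By the fidelity definition \cref{eq:Fidelity}, the square root of the left-hand side of the lemma is exactly the trace norm of this operator, so the task reduces to showing that this trace norm is additive over $j$.

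Third, writing $M=\sum_j M_j$ with $M_j=\big(\sqrt{J(\mc{A}_j)}\sqrt{J(\mc{B}_j)}\big)\otimes\ketbra{j}$, I would argue that $\|M\|_1=\sum_j\|M_j\|_1$, and then read off each block with the tensor-product identity \cref{eq:NormTensorProduct}: since $\|\ketbra{j}\|_1=1$, one gets $\|M_j\|_1=\|\sqrt{J(\mc{A}_j)}\sqrt{J(\mc{B}_j)}\|_1=\sqrt{\mc{F}(J(\mc{A}_j),J(\mc{B}_j))}$ by \cref{eq:Fidelity} again. Summing then yields the claimed identity.

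The main obstacle is the additivity step, because the products $\sqrt{J(\mc{A}_j)}\sqrt{J(\mc{B}_j)}$ are generally not Hermitian, so \cref{lem:orthogonality} does not apply verbatim. The resolution is that the orthogonality is supplied entirely by the classical register: the summands satisfy $M_jM_k^\dagger=M_j^\dagger M_k=0$ for $j\neq k$, so $MM^\dagger=\sum_j M_jM_j^\dagger$ is a sum of mutually orthogonal positive operators and the computation in \cref{lem:orthogonality} goes through unchanged on $MM^\dagger$, giving $\sqrt{MM^\dagger}=\sum_j\sqrt{M_jM_j^\dagger}$ and hence $\|M\|_1=\sum_j\|M_j\|_1$. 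Equivalently, one may simply observe that $M$ is block diagonal across the orthogonal register subspaces and read its trace norm off block by block. Either way this is a mild extension of \cref{lem:orthogonality} from Hermitian to register-indexed operators, and it is the only nontrivial point in the argument.
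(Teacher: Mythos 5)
The paper never proves this lemma itself: it is imported from \cite{mclaren2022} (note also the typo in the statement, whose left-hand side should read $\mc{F}(J(\mc{A}),J(\mc{B}))$, which you correctly read through), so your proposal must be judged on its own merits, and on those it is correct and complete. The chain of steps is sound: the Choi state of an instrument is block diagonal over the classical register, $J(\mc{A})=\sum_j J(\mc{A}_j)\otimes\ketbra{j}$ (valid up to a fixed permutation of tensor factors, since the paper's superoperator convention appends the vectorized register $\cket{j}$ as a trailing factor; this is harmless because the trace norm and fidelity are invariant under such a unitary reordering); the positive square root distributes over orthogonal positive blocks by uniqueness of the PSD square root; and the cross terms vanish via $\ketbra{j}\ketbra{k}=\delta_{jk}\ketbra{j}$. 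You also correctly flag and repair the one genuine subtlety, namely that \cref{lem:orthogonality} as stated applies to Hermitian summands while the blocks $\sqrt{J(\mc{A}_j)}\sqrt{J(\mc{B}_j)}$ are generally not Hermitian: since $M_jM_k^\dagger=M_j^\dagger M_k=0$ for $j\neq k$, one still has $MM^\dagger=\sum_j M_jM_j^\dagger$ with mutually orthogonal positive summands, so the square-root computation in \cref{lem:orthogonality} goes through and $\|M\|_1=\sum_j\|M_j\|_1$ (equivalently, the trace norm of a block-diagonal operator is the sum of the blocks' trace norms). A final point in your favor: your argument uses only complete positivity of the components $\mc{A}_j,\mc{B}_j$, not that either map is a genuine Lüders instrument, and this is exactly the generality in which the paper applies the lemma in \cref{cor:nonuniform}, where the $\mc{M}_j$ are arbitrary CP maps.
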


\begin{lemma}\label{cor:nonuniform}
 Let $\mc{M}$ be a subsystem measurement, as in \cref{eq:SubsystemMeasurement}, and let $\Theta(\mc{M})$ be a nonuniform stochastic implementation of $\mc{M}$, as in \cref{eq:StochasticImplementation}. Identify $\lambda_{0,0,j}$ via $J(\mc{T}'_{0,0,j})\mathrm{col}(I_{E})=\mathrm{col}(I_{E})\lambda_{0,0,j}$. Then
    \begin{align}\label{eq:nonUniformFidel}
        \sqrt{\mc{F}(J(\mc{M}),J(\Theta(\mc{M})))}=\frac{1}{D}\sum_{j\in\mathbb{Z}_{D}}\sqrt{\nu_{0,0,j}\lambda_{0,0,j}}\text{.}
    \end{align}
\end{lemma}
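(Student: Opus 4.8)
The plan is to reduce the instrument fidelity to a sum of outcome-wise fidelities via \cref{lem:fidelSums}, and then to evaluate each outcome term by exploiting that the ideal Choi matrix is rank one. Writing $\mc{M}=\sum_j \mc{M}_j\otimes\cket{j}$ with $\mc{M}_j=\ad_{I_E}\otimes\cketbra{j}$, and $\Theta(\mc{M})=\sum_j\Theta(\mc{M})_j\otimes\cket{j}$ with $\Theta(\mc{M})_j=\sum_{a,b}\mc{T}_{a,b,j}\otimes\cketbra{j+a}{j+b}$, \cref{lem:fidelSums} gives
\begin{align}
\sqrt{\mc{F}(J(\mc{M}),J(\Theta(\mc{M})))}=\sum_{j\in\bb{Z}_D}\sqrt{\mc{F}(J(\mc{M}_j),J(\Theta(\mc{M})_j))}.
\end{align}
It therefore suffices to show $\mc{F}(J(\mc{M}_j),J(\Theta(\mc{M})_j))=\nu_{0,0,j}\lambda_{0,0,j}/D^2$ for each $j$.

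First I would note that $\mc{M}_j=\ad_{I_E\otimes\ketbra{j}}$ has a single Kraus operator, so by \cref{eq:ChoiState} its Choi matrix is the rank-one operator $J(\mc{M}_j)=\ketbra{\psi_j}$ with $\ket{\psi_j}=(ED)^{-1/2}\col(I_E\otimes\ketbra{j})$, where $E=\dim\bb{H}_E$. For any positive operator $Y$ and any vector $\ket{\psi}$ the definition \cref{eq:Fidelity} collapses to $\mc{F}(\ketbra{\psi},Y)=\bra{\psi}Y\ket{\psi}$, since $\sqrt{\ketbra{\psi}}\sqrt{Y}$ is then rank one with trace-norm squared equal to $\bra{\psi}Y\ket{\psi}$. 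Hence $\mc{F}(J(\mc{M}_j),J(\Theta(\mc{M})_j))=\bra{\psi_j}J(\Theta(\mc{M})_j)\ket{\psi_j}$, an expression that is linear in the Choi matrix of $\Theta(\mc{M})_j$.

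The remaining work is to evaluate this expectation value term by term. Since $\Theta(\mc{M})_j$ is a sum of tensor-product maps $\mc{T}_{a,b,j}\otimes\cketbra{j+a}{j+b}$ on $\bb{H}_E\otimes\bb{H}_D$, I would reorganize each summand's Choi matrix as $J(\mc{T}_{a,b,j})\otimes J(\cketbra{j+a}{j+b})$ under the fixed permutation of reference and output registers that also carries $\col(I_E\otimes\ketbra{j})$ to $\col(I_E)\otimes\col(\ketbra{j})$; this factorizes the expectation value into an unmeasured-system factor times a measured-system factor for each $(a,b)$. The measured-system factor is $\col^\dagger(\ketbra{j})\,J(\cketbra{j+a}{j+b})\,\col(\ketbra{j})$, and using $\cketbra{j+a}{j+b}=\ad_{\ketbra{j+a}{j+b}}$ together with \cref{eq:ChoiState,eq:VectorizationIdentities} this equals $D^{-1}|\tr(\ketbra{j}\,\ketbra{j+a}{j+b})|^2=D^{-1}\delta_{a,0}\delta_{b,0}$, so orthogonality of the computational basis annihilates every cross term with $(a,b)\neq(0,0)$. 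For the surviving term the unmeasured-system factor is $\col^\dagger(I_E)J(\mc{T}_{0,0,j})\col(I_E)$; writing $\mc{T}_{0,0,j}=\nu_{0,0,j}\mc{T}'_{0,0,j}$, invoking the eigenvalue relation $J(\mc{T}'_{0,0,j})\col(I_E)=\lambda_{0,0,j}\col(I_E)$ from the hypothesis, and using $\col^\dagger(I_E)\col(I_E)=\tr I_E=E$, this factor equals $E\nu_{0,0,j}\lambda_{0,0,j}$. Collecting normalizations gives
\begin{align}
\mc{F}(J(\mc{M}_j),J(\Theta(\mc{M})_j))=\frac{1}{ED}\cdot E\nu_{0,0,j}\lambda_{0,0,j}\cdot\frac{1}{D}=\frac{\nu_{0,0,j}\lambda_{0,0,j}}{D^2},
\end{align}
and substituting $\sqrt{\mc{F}(J(\mc{M}_j),J(\Theta(\mc{M})_j))}=D^{-1}\sqrt{\nu_{0,0,j}\lambda_{0,0,j}}$ into the sum yields \cref{eq:nonUniformFidel}.

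I expect the main hazard to be the tensor reorganization and the dimensional bookkeeping rather than any conceptual difficulty: one must be careful that the Choi matrix of a tensor product of maps equals the tensor product of their Choi matrices only after permuting the reference and output registers, that the \emph{same} permutation sends $\col(I_E\otimes\ketbra{j})$ to $\col(I_E)\otimes\col(\ketbra{j})$, and that the three normalizations (the $(ED)^{-1}$ in $\ket{\psi_j}$, the $D^{-1}$ in the measured-system Choi matrix, and $\tr I_E=E$) combine correctly. A secondary point worth stating explicitly is the rank-one collapse $\mc{F}(\ketbra{\psi},Y)=\bra{\psi}Y\ket{\psi}$, which holds for an unnormalized $\ket{\psi}$ and for a non-normalized $Y$, so that no appeal to trace-preservation of the individual outcome components is required.
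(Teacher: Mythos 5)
Your proposal is correct and takes essentially the same route as the paper: both reduce to per-outcome fidelities via \cref{lem:fidelSums}, exploit that the ideal per-outcome Choi matrix is rank one so the fidelity collapses to a linear expectation value (your $\bra{\psi_j}J(\Theta(\mc{M})_j)\ket{\psi_j}$ is exactly the paper's $\tfrac{\tr \pi_j}{DE}\,v_j^\dagger J(\mc{M}_j)v_j$ step), and kill every $(a,b)\neq(0,0)$ contribution through $\tr\bigl(\ketbra{j}\,\ketbra{j+a}{j+b}\bigr)=\delta_{a,0}\delta_{b,0}$. Your only deviation---evaluating the surviving $(0,0)$ term via the permuted tensor factorization of the Choi matrix together with the hypothesis $J(\mc{T}'_{0,0,j})\col(I_E)=\lambda_{0,0,j}\col(I_E)$, where the paper instead expands in Hilbert--Schmidt orthogonal Kraus operators with $B^{(1)}_{0,0,j}=\sqrt{\nu_{0,0,j}\lambda_{0,0,j}}\,I_E$ and traceless remainders---is a cosmetic variant of the same computation (modulo the harmless slip that $E=\dim\bb{C}^E$, not $\dim\bb{H}_E=E^2$).
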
 
\begin{proof}
For all $j \in \bb{Z}_D$, we define $\pi_j = I_E \otimes \ketbra{j}$ the corresponding completely positive map
\begin{align}
    \mc{A}_j : \bb{H}_{DE} \to \bb{H}_{DE} :: \mc{A}_j(\rho) = \pi_j \rho \pi_j.
\end{align}
By \cref{lem:fidelSums}, we have
\begin{align}
     \sqrt{\mc{F}(J(\mc{M}),J(\Theta(\mc{M})))} = \sum_{j \in \bb{Z}_D} \sqrt{\mc{F}(J(\mc{A}_j),J(\mc{M}_j))}.
\end{align}
From \cref{eq:ChoiState}, we have
\begin{align}
    \sqrt{J(\mc{A}_j)} = \sqrt{\frac{\tr \pi_j}{DE}} v_j v_j^\dagger
\end{align}
where $v_j = \col(\pi_j) / \sqrt{\tr \pi_j}$ is a unit vector.
By \cref{eq:Fidelity} and the homogeneity of the trace norm, for each $j \in \bb{Z}_D$ we then have
\begin{align}
    \sqrt{\mc{F}(J(\mc{A}_j),J(\mc{M}_j))} 
    &= \|\sqrt{J(\mc{A}_j)} \sqrt{J(\mc{M}_j)} \|_1 \notag\\
    &= \sqrt{\frac{\tr \pi_j}{DE}} \| v_j v_j^\dagger \sqrt{J(\mc{M}_j)} \|_1 \notag\\
    &= \sqrt{\frac{\tr \pi_j}{DE}} \sqrt{v_j^\dagger J(\mc{M}_j) v_j}.
\end{align}
To simplify this further, let $\{B^{(\alpha)}_{a,b,j}\}_{\alpha=1}^{E^{2}}$ be Hilbert-Schmidt orthogonal Kraus operators for $\mc{T}_{a,b,j}$ and $c^{(\alpha)}_{a,b,j} = \col(B^{(\alpha)}_{a,b,j} \otimes \ketbra{j+a}{j+b})$.
Then from \cref{eq:ChoiState}, we have
\begin{align}
    J(\mc{M}_j) = \frac{1}{DE} \sum_{a, b \in \bb{Z}_D} c^{(\alpha)}_{a, b, j} (c^{(\alpha)}_{a, b, j})^\dagger
\end{align}
and that
\begin{align}
        v_j^\dagger c^{(\alpha)}_{a, b, j} = \delta_{a,0}\delta_{b,0} \tr B^{(\alpha)}_{a,b,j} / \sqrt{\tr \pi_j}. 
\end{align}
Up to now, we have not needed the assumption that $\mc{T}_{j, a, b}$ is an unnormalized stochastic channel.
Adding in this assumption, we have
\begin{align}
    B^{(1)}_{a,b,j}=I_{E}\sqrt{\nu_{a, b, j}\lambda_{a,b,j}},
\end{align}
and all other Kraus operators are traceless, which gives
\begin{align}
    \sqrt{\mc{F}(J(\mc{A}_j),J(\mc{M}_j))} 
    &= \frac{1}{D} \sum_{j \in \bb{Z}_D} \frac{1}{E}\tr B^{(1)}_{0, 0, j} \notag\\
    &= \frac{1}{D} \sum_{j \in \bb{Z}_D} \sqrt{\nu_{a, b, j}\lambda_{a,b,j}}
\end{align}
as claimed.
\end{proof}
\begin{corollary}\label{cor:uniform}
 Let $\mc{M}$ be a subsystem measurement, as in \cref{eq:SubsystemMeasurement}, and let $\Theta(\mc{M})$ be a uniform stochastic implementation of $\mc{M}$, as in \cref{eq:UniformStochasticImplementation}. Identify $\lambda_{0,0}$ via $J(\mc{T}'_{0,0})\mathrm{col}(I_{E})=\mathrm{col}(I_{E})\lambda_{0,0}$. Then
 \begin{align}\label{eq:UniformFidel}
        \mc{F}(J(\mc{M}),J(\Theta(\mc{M})))=\nu_{0,0}\lambda_{0,0}\text{.}
    \end{align}
\end{corollary}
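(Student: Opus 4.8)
The plan is to obtain this as an immediate specialization of \cref{cor:nonuniform}. A uniform stochastic implementation, as written in \cref{eq:UniformStochasticImplementation}, is precisely a nonuniform stochastic implementation \cref{eq:StochasticImplementation} in which the error maps carry no dependence on the outcome label $j$; that is, one recovers \cref{eq:UniformStochasticImplementation} from \cref{eq:StochasticImplementation} by setting $\mc{T}_{a,b,j} = \mc{T}_{a,b}$ for every $j \in \bb{Z}_D$. Under the normalization \cref{eq:Normalization}, $\mc{T}_{a,b} = \nu_{a,b} \mc{T}'_{a,b}$, this forces $\nu_{a,b,j} = \nu_{a,b}$ and $\mc{T}'_{a,b,j} = \mc{T}'_{a,b}$, so that the eigenvalue identified through $J(\mc{T}'_{0,0,j})\col(I_E) = \col(I_E)\lambda_{0,0,j}$ becomes $j$-independent, giving $\lambda_{0,0,j} = \lambda_{0,0}$ for all $j$.

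First I would confirm this reduction by matching the two channel expressions term by term, verifying that the $j$-independence of the $\mc{T}_{a,b}$ is the only structural distinction between the two forms. Then I would substitute $\nu_{0,0,j} = \nu_{0,0}$ and $\lambda_{0,0,j} = \lambda_{0,0}$ into the conclusion \cref{eq:nonUniformFidel} of \cref{cor:nonuniform}, yielding
\begin{align}
    \sqrt{\mc{F}(J(\mc{M}),J(\Theta(\mc{M})))} = \frac{1}{D}\sum_{j \in \bb{Z}_D} \sqrt{\nu_{0,0}\lambda_{0,0}} = \sqrt{\nu_{0,0}\lambda_{0,0}},
\end{align}
where the $D$ identical summands cancel the prefactor $1/D$. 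Squaring both sides then gives \cref{eq:UniformFidel}.

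Since \cref{cor:nonuniform} carries out all of the analytic work, I do not expect any genuine obstacle here; the one point requiring care is the bookkeeping of the normalization \cref{eq:Normalization}, namely confirming that equating the maps genuinely renders \emph{both} the probability weights $\nu_{a,b}$ and the leading eigenvalues $\lambda_{a,b}$ outcome-independent rather than only one of the two. This follows from the uniqueness of the decomposition of a nonzero completely positive map into a scalar times a trace-preserving map, after which the collapse of the sum is immediate.
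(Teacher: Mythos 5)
Your proposal is correct and follows exactly the paper's route: the paper also proves \cref{cor:uniform} as an immediate specialization of \cref{cor:nonuniform}, noting that uniformity makes $\nu_{0,0,j}=\nu_{0,0}$ and $\lambda_{0,0,j}=\lambda_{0,0}$ independent of $j$, so the $D$ identical summands cancel the $1/D$ prefactor. Your added remark on the uniqueness of the scalar/trace-preserving decomposition is a reasonable bit of extra care but not a departure from the paper's argument.
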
 
\begin{proof}
    Immediate from \cref{cor:nonuniform} in light of uniformity, that is $\nu_{0,0,j}=\nu_{0,0}$ and $\lambda_{0,0,j}=\lambda_{0,0}$ are independent of the measurement outcome $j\in\mathbb{Z}_{D}$.
\end{proof}

\section{Diamond distance between quantum instruments}
\label{sec:instDiamonds}

We now obtain lower and upper bounds on the diamond distance between the implementation of an instrument and its ideal.
The lower bound is more complicated than for unitary channels, as it requires analyzing multiple measurement outcomes.

\begin{theorem}\label{thm:GeneralInstrumentDistance}
    Let $\mc{M}$ be a subsystem measurement as in \cref{eq:SubsystemMeasurement}, $\pi_j = I_E \otimes \ketbra{j}$, $\Theta(\mc{M})$ be an implementation of $\mc{M}$ in the form \cref{eq:GeneralSubsystemMeasurementError}, and $\Delta = \Theta(\mc{M}) - \mc{M}$.
    Then for any $\sigma \in \bb{D}_{E}$ and any $j \in \bb{Z}_D$ with $\sigma_j = \sigma \otimes \ketbra{j}$, the diamond distance between $\mc{M}$ and $\Theta(\mc{M})$ satisfies
    \begin{align*}
        \| \Delta \|_\diamond &\geq 1 - \tr \mc{M}_j(\sigma_j) + \|\mc{M}_j(\sigma_j) - \sigma_j\|_1 
    \end{align*}
    and
    \begin{align*}
        \| \Delta \|_\diamond \leq DE\sum_k\|J(\mc{M}_k) - J(\ad_{\pi_k})\|_1.
    \end{align*}
\end{theorem}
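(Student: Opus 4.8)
The plan is to prove the two bounds by quite different routes, since they reflect different structural features of the problem. For the \emph{lower bound}, I would exploit the fact that the diamond norm maximizes over all input states, so any particular choice of input furnishes a valid lower bound. The natural test input is the product state $\sigma_j = \sigma \otimes \ketbra{j}$ supported on a single measurement label $j$ (with no need to go to an extended/entangled input, since we are free to pick a convenient $\sigma$). First I would write $\Delta = \Theta(\mc{M}) - \mc{M}$ and evaluate $\Delta(\sigma_j)$, observing that because $\mc{M} = \sum_k \ad_{\pi_k} \otimes \cket{k}$ and the output register labels $\cket{k}$ are mutually orthogonal, the action of $\Delta$ on $\sigma_j$ decomposes into terms tagged by distinct classical outcomes. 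Applying \cref{lem:orthogonality} to these orthogonal blocks lets me split the trace norm $\|\Delta(\sigma_j)\|_1$ into a sum over outcomes $k$. The $k = j$ block contributes $\|\mc{M}_j(\sigma_j) - \sigma_j\|_1$ (recalling that $\ad_{\pi_j}(\sigma_j) = \sigma_j$ when $\sigma_j = \sigma \otimes \ketbra{j}$), while the remaining blocks $k \neq j$ contribute $\sum_{k \neq j}\|\mc{M}_k(\sigma_j)\|_1 = \sum_{k\neq j}\tr \mc{M}_k(\sigma_j)$ by \cref{eq:NormPostiveOperator} since each $\mc{M}_k$ is completely positive. That leftover probability sum equals $1 - \tr \mc{M}_j(\sigma_j)$ whenever $\Theta(\mc{M})$ is trace-preserving overall, giving exactly the stated bound.

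For the \emph{upper bound}, the route is to pass from the diamond norm to the trace norm of Choi matrices, paying a dimension factor, and then to split by outcome. I would first bound $\|\Delta\|_\diamond$ by the trace norm of its Choi matrix up to the factor $DE$ (the total Choi dimension), using a standard inequality of the form $\|\mc{A}\|_\diamond \leq D_{\mathrm{in}}\|J(\mc{A})\|_1$; this is the complement of the lower bound of lemma~7 of \cite{RBCon} already invoked in the proof of \cref{thm:unnormalizedStochastic}. Next, since $J(\Delta) = J(\Theta(\mc{M})) - J(\mc{M})$ and both Choi matrices are block-diagonal across the orthogonal classical output registers $\cket{k}$, the difference is itself block-diagonal, so \cref{lem:orthogonality} (or just the orthogonality of the blocks) gives $\|J(\Delta)\|_1 = \sum_k \|J(\mc{M}_k) - J(\ad_{\pi_k})\|_1$. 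Combining these two observations yields the claimed $\|\Delta\|_\diamond \leq DE \sum_k \|J(\mc{M}_k) - J(\ad_{\pi_k})\|_1$.

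The main obstacle, and the step that needs the most care, is getting the dimension bookkeeping exactly right in both the Choi normalization and the dimension factor on the upper bound. Our Choi convention \cref{eq:ChoiState} carries a $1/D$ normalization, and the instrument lives on $\bb{H}_{DE}$ with an additional $D$-dimensional classical register, so I would track whether the correct prefactor is $DE$, $D^2 E$, or something else, and confirm it against the convention used in lemma~7 of \cite{RBCon}. A secondary subtlety on the lower bound is verifying that the orthogonality hypothesis of \cref{lem:orthogonality} genuinely holds: the blocks $\mc{M}_k(\sigma_j)\otimes\ketbra{k}$ are orthogonal because the $\ketbra{k}$ are, so $M_k M_k^\dagger$ is supported on $\ketbra{k}$ and cross terms vanish — this is immediate but worth stating explicitly. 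I expect the inequality directions and the single-outcome reduction to be routine once the orthogonal-block decomposition is in place; the genuine work is the careful normalization audit.
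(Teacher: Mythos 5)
Your proposal matches the paper's proof essentially step for step: the lower bound is obtained exactly as you describe, by evaluating $\Delta$ on the product input $\sigma_j$, splitting the trace norm over the orthogonal outcome blocks via \cref{lem:orthogonality}, and using trace preservation of $\sum_k \mc{M}_k$ to convert the $k \neq j$ contributions into $1 - \tr \mc{M}_j(\sigma_j)$; the upper bound likewise follows the standard Choi-state bound $\|\mc{A}\|_\diamond \leq D_{\mathrm{in}} \|J(\mc{A})\|_1$ combined with the same block-orthogonality argument. Your dimension-bookkeeping worry resolves exactly as you anticipated: the paper notes the prefactor is the input dimension, which here is $DE$.
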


\begin{proof}
    To prove the lower bound, note that $\sigma_j \in \bb{D}_{DE}$ is a density operator for all $j$ and so for all $j$ we have
    \begin{align}
        \| \Delta \|_\diamond 
        &\geq \|\mc{I}_{DE}(\sigma_j) \otimes \Delta(\sigma_j) \|_1 \geq \| \Delta(\sigma_j) \|_1
    \end{align}
    by \cref{eq:NormPostiveOperator,eq:NormTensorProduct}.
    By assumption,
    \begin{align}
        \Delta(\sigma_j) = \sum_k (\mc{M}_k - \ad_{\pi_k})(\sigma_j) \otimes \ketbra{k}.
    \end{align}
    By \cref{lem:orthogonality,eq:NormTensorProduct}, we then have
    \begin{align}
       \|\Delta(\sigma_j) \|_1 
       &= \sum_k \|\mc{M}_k(\sigma_j) - \delta_{j, k}\sigma_j \|_1 .
    \end{align}
    By \cref{eq:NormPostiveOperator} and as $\sum_j \mc{M}_j$ is a trace-preserving map, we then have
    \begin{align}\label{eq:ActionOnSigmaJ}
       \|\Delta(\sigma_j) \|_1 
       &= \| \mc{M}_j(\sigma_j) - \sigma_j \|_1 + 1 - \tr \mc{M}_j(\sigma_j),
    \end{align}
    thus obtaining the lower bound.
    The upper bound follows from the standard upper bound obtained using the Choi state~\cite{Wallman2014} (noting that the dimensional factor is from the input dimension) together with \cref{lem:orthogonality}.
\end{proof}

\section{Stochastic implementations of subsystem measurements}
\label{sec:stochDiamonds}
We now prove that the diamond distance between a uniform stochastic implementation of a subsystem measurement and its corresponding ideal saturates the lower bound in \cref{thm:GeneralInstrumentDistance} and provide an explicit formula and interpretation.
The proof that the lower bound is saturated is essentially trivial and can readily be generalized to include errors that depend on the measurement outcome, however, the explicit formula would then require a maximization over measurement outcomes. In light of \cref{sec:InstrumentFidelity}, the following theorem establishes that the diamond distance of a uniform stochastic implementation $\Theta(\mc{M})$ to $\mc{M}$ is, quite simply, the probability that an error occurs.

\begin{theorem}\label{thm:stochasticDiamondDistance}
    Let $\mc{M}$ be a subsystem measurement as in \cref{eq:SubsystemMeasurement} and $\pi_j = I_E \otimes \ketbra{j}$. Then for any uniform stochastic implementation $\Theta(\mc{M})$ of $\mc{M}$ in the form \cref{eq:UniformStochasticImplementation}, the diamond distance between $\mc{M}$ and $\Theta(\mc{M})$ is
    \begin{align*}
       \frac{1}{2} \| \Theta(\mc{M}) - \mc{M} \|_\diamond &= 1-\mc{F}(J(\mc{M}),J(\Theta(\mc{M}))).
    \end{align*}
\end{theorem}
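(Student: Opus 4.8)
The plan is to sandwich $\|\Theta(\mc{M}) - \mc{M}\|_\diamond$ between matching bounds, both equal to $2(1 - \nu_{0,0}\lambda_{0,0})$, and then convert to the process fidelity via \cref{cor:uniform}, which gives $\mc{F}(J(\mc{M}),J(\Theta(\mc{M}))) = \nu_{0,0}\lambda_{0,0}$. Throughout I would use the normalization $\mc{T}_{0,0} = \nu_{0,0}\mc{T}'_{0,0}$, where $\mc{T}'_{0,0}$ is a normalized stochastic channel whose identity Kraus operator is $\sqrt{\lambda_{0,0}}\,I_E$, so that $\mc{T}_{0,0} = \nu_{0,0}\lambda_{0,0}\mc{I}_E + \nu_{0,0}\sum_{k\neq 0}\ad_{B_k}$ with the $B_k$ traceless.

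For the lower bound I would run the argument of \cref{thm:GeneralInstrumentDistance} but feed the diamond norm an input $\rho\otimes\ketbra{j}$ in which the unmeasured subsystem $E$ is entangled with an external reference $R$, rather than a product state on $E$ alone. Writing out $(\mc{I}_R\otimes\Delta)(\rho\otimes\ketbra{j})$ and sorting the result by the classical outcome register and by the value of the measured qudit, the only block overlapping the ideal term $\mc{M}$ is the outcome-$j$, unshifted block, which contributes $(\mc{I}_R\otimes\mc{T}_{0,0} - \mc{I}_{RE})(\rho)$; every other block is a positive operator $(\mc{I}_R\otimes\mc{T}_{a,b})(\rho)$ sitting in a mutually orthogonal sector, with total trace $\sum_{(a,b)\neq(0,0)}\nu_{a,b} = 1 - \nu_{0,0}$. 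Applying \cref{lem:orthogonality} then gives
\begin{align*}
\|(\mc{I}_R\otimes\Delta)(\rho\otimes\ketbra{j})\|_1 = 1 - \nu_{0,0} + \|(\mc{I}_R\otimes\mc{T}_{0,0} - \mc{I}_{RE})(\rho)\|_1.
\end{align*}
Maximizing over $\rho$ and the reference dimension turns the last term into $\|\mc{T}_{0,0} - \mc{I}_E\|_\diamond$, which by \cref{thm:unnormalizedStochastic} equals $1 + \nu_{0,0} - 2\nu_{0,0}\lambda_{0,0}$, so $\|\Delta\|_\diamond \geq 2(1 - \nu_{0,0}\lambda_{0,0})$.

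For the upper bound I would peel off the perfect part of the channel: write $\Delta = (\nu_{0,0}\lambda_{0,0} - 1)\mc{M} + E$ with $E = \Theta(\mc{M}) - \nu_{0,0}\lambda_{0,0}\mc{M}$. Because subtracting $\nu_{0,0}\lambda_{0,0}\mc{I}_E$ from the $(0,0)$ block of $\Theta(\mc{M})$ removes exactly the identity Kraus term while leaving the completely positive maps $\nu_{0,0}\ad_{B_k}$ and all $(a,b)\neq(0,0)$ blocks $\mc{T}_{a,b}\otimes\cketbra{j+a}{j+b}$ intact, $E$ is completely positive. The triangle inequality and homogeneity of the diamond norm then give $\|\Delta\|_\diamond \leq (1 - \nu_{0,0}\lambda_{0,0})\|\mc{M}\|_\diamond + \|E\|_\diamond$. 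Here $\|\mc{M}\|_\diamond = 1$, and since $E$ is completely positive its diamond norm equals $\|E^\dagger(I)\|_\infty$; as $\Theta(\mc{M})$ and $\mc{M}$ are both trace preserving, $E^\dagger(I) = (1 - \nu_{0,0}\lambda_{0,0})I$, so $\|E\|_\diamond = 1 - \nu_{0,0}\lambda_{0,0}$ and hence $\|\Delta\|_\diamond \leq 2(1 - \nu_{0,0}\lambda_{0,0})$.

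Combining the two bounds yields $\tfrac12\|\Theta(\mc{M}) - \mc{M}\|_\diamond = 1 - \nu_{0,0}\lambda_{0,0} = 1 - \mc{F}(J(\mc{M}),J(\Theta(\mc{M})))$ by \cref{cor:uniform}. The step I expect to be the main obstacle is the lower bound: the separable-input bound literally stated in \cref{thm:GeneralInstrumentDistance} is \emph{not} tight for a generic $\mc{T}_{0,0}$ (for instance a full Pauli channel, whose diamond distance is attained only with reference entanglement), so the saturation genuinely requires letting the unmeasured subsystem be entangled with a reference, collapsing the instrument diamond distance onto the channel diamond distance of \cref{thm:unnormalizedStochastic}. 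Care is also needed to confirm that $E$ is completely positive and to evaluate $\|E\|_\diamond$ through the adjoint.
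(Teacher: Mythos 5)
Your proposal is correct, but it reaches the paper's identity $\tfrac{1}{2}\|\Delta\|_\diamond = 1-\nu_{0,0}\lambda_{0,0}$ by a genuinely different mechanism for the upper half of the sandwich. The paper never brackets the diamond norm: it computes it exactly by the structural observation $\Delta = \Delta\circ\mc{E}$, where $\mc{E}=\sum_j \ad_{\pi_j}$ is the ideal measure-and-forget channel (valid because each block $\cketbra{j+a}{j+b}$ in \cref{eq:UniformStochasticImplementation} reads only diagonal entries of the measured register), which shows that optimal inputs may be taken of the classically flagged form $\sigma\otimes\ketbra{j}$; the block-orthogonality computation you carry out for the lower bound (same decomposition, same appeal to \cref{lem:orthogonality}, same appended reference via $\mc{M}\to\mc{I}_f\otimes\mc{M}$, same final invocation of \cref{thm:unnormalizedStochastic} and \cref{cor:uniform}) then holds with equality, giving $\|\Delta\|_\diamond = 1-\nu_{0,0}+\|\mc{T}_{0,0}-\mc{I}_E\|_\diamond$ directly. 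Your replacement upper bound --- the split $\Delta = (\nu_{0,0}\lambda_{0,0}-1)\mc{M}+E$ with $E$ completely positive and $\|E\|_\diamond = \|E^\dagger(I)\|_\infty = 1-\nu_{0,0}\lambda_{0,0}$ by trace preservation of both instruments --- is sound (complete positivity of $E$ follows exactly as you say, since removing $\nu_{0,0}\lambda_{0,0}\,\mc{I}_E$ deletes only the identity Kraus term $B_0=\sqrt{\lambda_{0,0}}\,I_E$ of the $(0,0)$ block, and the Choi-eigenvector definition of $\lambda_{0,0}$ in \cref{cor:uniform} coincides with the identity-Kraus weight because the remaining Kraus operators are traceless by Hilbert--Schmidt orthogonality to $B_0$). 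In effect you lift the triangle-inequality strategy of the paper's own proof of \cref{thm:unnormalizedStochastic} from channels to instruments, which is arguably more self-contained. What the paper's route buys instead is information about where the norm is attained: the identity $\Delta=\Delta\circ\mc{E}$ is what lets the authors remark that the argument extends verbatim to outcome-dependent (nonuniform) stochastic errors at the cost of a maximization over $j$, a generalization your two-sided bracket does not immediately provide since your upper bound exploits uniformity through the single scalar $\nu_{0,0}\lambda_{0,0}$. You are also right, and it is worth emphasizing, that the product-input lower bound literally stated in \cref{thm:GeneralInstrumentDistance} would not suffice: entangling the unmeasured register with a reference is essential to recover the full channel diamond norm $\|\mc{T}_{0,0}-\mc{I}_E\|_\diamond$, and your handling of this matches the paper's $\mc{I}_f\otimes\mc{M}$ device.
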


\begin{proof}
    Let
    \begin{align}\label{eq:EntanglementBreaking}
        \mc{E} = \sum_{j \in \bb{Z}_D} \ad_{\pi_j},
    \end{align}
    which corresponds to performing the ideal measurement and forgetting the outcome.
    Note that
    \begin{align}
        \Delta = \Delta \circ \mc{E}
    \end{align}
    and so for any $\rho \in \bb{D}_{DE}$ we have
    \begin{align}
        \| \Delta(\rho) \|_1 = \| \Delta (\tilde{\rho}) \|_1
    \end{align}
    where $\tilde{\rho} = \mc{E}(\rho)$.
    From \cref{eq:EntanglementBreaking}, $\tilde{\rho} = \sum_j \alpha_j \sigma^{(j)} \otimes \ketbra{j}$ for some probability distribution $\alpha : \bb{Z}_D \to [0, 1]$ and density operators $\sigma^{(j)} \in \bb{D}_E$.
    Therefore, by the triangle inequality, $\| \Delta(\rho) \|_1$ is maximized by a state $\rho = \sigma_j = \sigma \otimes \ketbra{j}$.
    Noting that \cref{eq:ActionOnSigmaJ} is an equality and defining $\mu_j = \mc{M}_j(\sigma_j)$, we thus have
    \begin{align}\label{eq:InducedNorm}
        \max_{\rho \in \bb{D}_{ED}} \|\Delta(\rho)\|_1 = 1 + \max_{\stackrel{\sigma \in \bb{D}_E}{j \in \bb{Z}_D}}  \| \mu_j - \sigma_j \|_1 - \tr \mu_j.
    \end{align}
    However, we have yet to account for the tensor factor in the definition of the diamond distance.

    Before accounting for the additional identity channel in \cref{eq:diamond}, we first use the assumption that $\Theta(\mc{M})$ is a uniform stochastic implementation of $\mc{M}$ to simplify \cref{eq:InducedNorm}.
    Equating terms between \cref{eq:GeneralSubsystemMeasurementError,eq:UniformStochasticImplementation}, we have
\begin{align}
    \mc{M}_j = \sum_{a, b \in \bb{Z}_D} \mc{T}_{a, b} \otimes \cketbra{j + a}{j + b}.
\end{align}
Therefore we have 
\begin{align}
    \mu_j = \sum_{a \in \bb{Z}_D} \mc{T}_{a, 0}(\sigma) \otimes \ketbra{j + a}.
\end{align}
    Using \cref{eq:NormPostiveOperator,eq:NormTensorProduct}, we have
\begin{align}
    \| \mu_j - \sigma_j \|_1 &= \| \mc{T}_{0, 0}(\sigma) - \sigma \|_1 + \sum_{a \in \bb{Z}_D : a > 0} \tr \mc{T}_{a, 0}(\sigma) \notag\\
    &= \| \mc{T}_{0, 0}(\sigma) - \sigma \|_1 + \tr \mu_j - \tr \mc{T}_{0, 0}(\sigma).
\end{align}
Therefore with \cref{eq:Normalization}, \cref{eq:InducedNorm} simplifies to
\begin{align}\label{eq:SimplifiedInducedNorm}
    \max_{\rho \in \bb{D}_{ED}} \|\Delta(\rho)\|_1 &= 1 + \max_{\sigma \in \bb{D}_E}  \| \mc{T}_{0,0}(\sigma) - \sigma \|_1 - \tr \mc{T}_{0, 0}(\sigma) \notag\\
    &= 1 - \nu_{0, 0} + \max_{\sigma \in \bb{D}_E}  \| \mc{T}_{0,0}(\sigma) - \sigma \|_1.
\end{align}
To include the identity channel, we can simply use the above arguments with $\mc{M} \to \mc{I}_f \otimes\mc{M}$ and $\Theta(\mc{M}) \to \mc{I}_f \otimes \Theta(\mc{M})$ for any dimension $f$ and note that this mapping does not change the value of $\nu_{0, 0}$.
Therefore
\begin{align}\label{eq:SimplifiedDiamondNorm}
    \|\Delta\|_\diamond &= 1 - \nu_{0, 0} + \| \mc{T}_{0,0} - \mc{I}_E \|_\diamond.
\end{align}
In light of \cref{thm:unnormalizedStochastic} and Eq.(21) in \cite{Graydon2022} we have that
\begin{align}
    \| \mc{T}_{0,0} - \mc{I}_E \|_\diamond=1+\nu_{0,0}-2\nu_{0,0}\lambda,
\end{align}
whence 
\begin{align}
    \frac{1}{2}\|\Delta\|_{\diamond}=1-\nu_{0,0}\lambda.
\end{align}
Substituting in the process fidelity via \cref{cor:uniform} completes the proof.
\end{proof}

\section{On the necessity of uniformity}

We now show that the assumption that the implementation is a uniform stochastic channel is required to obtain \cref{thm:stochasticDiamondDistance}.
We will show this by constructing a non-uniform stochastic channel where the diamond distance is not given by \cref{thm:stochasticDiamondDistance}.
Consider a non-uniform stochastic channel where the only error is a stochastic error that affects the unmeasured qudits but the error depends upon the observed outcome, that is,
\begin{align}
    \Theta(\mc{M}) = \sum_{j\in\mathbb{Z}_d} \mc{T}_j \otimes \cketbra{j} \otimes \cket{j},
\end{align}
where each $\mc{T}_j$ is a stochastic channel.

We now compute $\|\Theta(\mc{M}) - \mc{M}\|_\diamond$.
We first note that \cref{eq:InducedNorm} is valid for any implementation $\Theta(\mc{M})$, and hence for any positive integer $F$ we have
    \begin{align}
        \max_{\rho \in \bb{D}_{DEF}} \|\Delta_F(\rho)\|_1 = 1 + \max_{\stackrel{\sigma \in \bb{D}_{FE}}{j \in \bb{Z}_D}}  \| \mu_j - \sigma_j \|_1 - \tr \mu_j,
    \end{align}
where
\begin{align}
    \Delta_F &= \mc{I}_F\otimes (\Theta(\mc{M})-\mc{M})\notag\\
    \mc{M}_j&=\mc{I}_F\otimes\mc{T}_j\otimes \ad_{\ketbra{k}}\notag\\
    \sigma_j &= \sigma \otimes\ketbra{j},\quad \sigma\in\mathbb{D}_{DF}\notag\\
    \mu_j &= \mc{M}_j(\sigma_j).
\end{align}
Since the stochastic channels $\mc{T}_j$ are trace preserving we have $\tr\mu_j=1$, and hence with the above definitions we simplify to
\begin{align}\label{eq:NonUniformDiamond1}
    \max_{\rho \in \bb{D}_{DEF}} \|\Delta_F(\rho)\|_1 &= \max_{\stackrel{\sigma \in \bb{D}_{FE}}{j \in \bb{Z}_D}}  \| \mc{I}_F\otimes\mc{T}_j (\sigma) - \sigma \|_1 
\end{align}
As \cref{eq:NonUniformDiamond1} holds for any $F$, we have~\cite{Graydon2022}
\begin{align}
    \|\Theta(\mc{M})-\mc{M}\|_\diamond &= \max_{j \in \bb{Z}_D}  \| \mc{T}_j - \mc{I}_E \|_\diamond \notag\\
    &= \max_j 1 - \mc{F}(J(\mc{T}_j), J(\mc{I}_E)).
\end{align}
In contrast, by \cref{cor:nonuniform} we have
\begin{align}
    \sqrt{\mc{F}(J(\Theta(\mc{M})), J(\mc{M}))} &= \frac{1}{D} \sum_{j \in \bb{Z}_D}  \sqrt{\mc{F}(J(\mc{T}_j), J(\mc{I}_E))},
\end{align}
so that \cref{thm:stochasticDiamondDistance} is violated as claimed.

\section{Conclusion}\label{sec:conclusion}

In this paper, we have generalized the operational interpretation of the diamond distance and process fidelity as a probability of an error from stochastic channels to uniform stochastic instruments.
The restriction to uniform instruments plays an important role, because otherwise, the probability of an error depends upon the outcome that is observed.
As general measurement errors, including errors on idling systems, can be tailored into an effective error that is a uniform stochastic channel using a recent generalization of randomized compiling~\cite{beale2023randomized}, our result thus gives a way of meaningfully evaluating the error rates of measurements in realistic quantum computers.

\bibliography{library.bib}

%merlin.mbs apsrev4-1.bst 2010-07-25 4.21a (PWD, AO, DPC) hacked
%Control: key (0)
%Control: author (0) dotless jnrlst
%Control: editor formatted (1) identically to author
%Control: production of article title (0) allowed
%Control: page (1) range
%Control: year (0) verbatim
%Control: production of eprint (0) enabled
\begin{thebibliography}{40}%
\makeatletter
\providecommand \@ifxundefined [1]{%
 \@ifx{#1\undefined}
}%
\providecommand \@ifnum [1]{%
 \ifnum #1\expandafter \@firstoftwo
 \else \expandafter \@secondoftwo
 \fi
}%
\providecommand \@ifx [1]{%
 \ifx #1\expandafter \@firstoftwo
 \else \expandafter \@secondoftwo
 \fi
}%
\providecommand \natexlab [1]{#1}%
\providecommand \enquote  [1]{``#1''}%
\providecommand \bibnamefont  [1]{#1}%
\providecommand \bibfnamefont [1]{#1}%
\providecommand \citenamefont [1]{#1}%
\providecommand \href@noop [0]{\@secondoftwo}%
\providecommand \href [0]{\begingroup \@sanitize@url \@href}%
\providecommand \@href[1]{\@@startlink{#1}\@@href}%
\providecommand \@@href[1]{\endgroup#1\@@endlink}%
\providecommand \@sanitize@url [0]{\catcode `\\12\catcode `\$12\catcode
  `\&12\catcode `\#12\catcode `\^12\catcode `\_12\catcode `\%12\relax}%
\providecommand \@@startlink[1]{}%
\providecommand \@@endlink[0]{}%
\providecommand \url  [0]{\begingroup\@sanitize@url \@url }%
\providecommand \@url [1]{\endgroup\@href {#1}{\urlprefix }}%
\providecommand \urlprefix  [0]{URL }%
\providecommand \Eprint [0]{\href }%
\providecommand \doibase [0]{http://dx.doi.org/}%
\providecommand \selectlanguage [0]{\@gobble}%
\providecommand \bibinfo  [0]{\@secondoftwo}%
\providecommand \bibfield  [0]{\@secondoftwo}%
\providecommand \translation [1]{[#1]}%
\providecommand \BibitemOpen [0]{}%
\providecommand \bibitemStop [0]{}%
\providecommand \bibitemNoStop [0]{.\EOS\space}%
\providecommand \EOS [0]{\spacefactor3000\relax}%
\providecommand \BibitemShut  [1]{\csname bibitem#1\endcsname}%
\let\auto@bib@innerbib\@empty
%</preamble>
\bibitem [{\citenamefont {Preskill}(2018)}]{preskill2018quantum}%
  \BibitemOpen
  \bibfield  {author} {\bibinfo {author} {\bibfnamefont {John}\ \bibnamefont
  {Preskill}},\ }\bibfield  {title} {\enquote {\bibinfo {title} {Quantum
  computing in the nisq era and beyond},}\ }\href@noop {} {\bibfield  {journal}
  {\bibinfo  {journal} {Quantum}\ }\textbf {\bibinfo {volume} {2}},\ \bibinfo
  {pages} {79} (\bibinfo {year} {2018})}\BibitemShut {NoStop}%
\bibitem [{\citenamefont {Shor}(1996)}]{FTShor}%
  \BibitemOpen
  \bibfield  {author} {\bibinfo {author} {\bibfnamefont {Peter~W.}\
  \bibnamefont {Shor}},\ }\bibfield  {title} {\enquote {\bibinfo {title}
  {Fault-tolerant quantum computation},}\ }\href {\doibase
  10.48550/ARXIV.QUANT-PH/9605011} {\  (\bibinfo {year} {1996}),\
  10.48550/ARXIV.QUANT-PH/9605011}\BibitemShut {NoStop}%
\bibitem [{\citenamefont {Preskill}(1997)}]{FTPreskill}%
  \BibitemOpen
  \bibfield  {author} {\bibinfo {author} {\bibfnamefont {John}\ \bibnamefont
  {Preskill}},\ }\bibfield  {title} {\enquote {\bibinfo {title} {Fault-tolerant
  quantum computation},}\ }\href {\doibase 10.48550/ARXIV.QUANT-PH/9712048} {\
  (\bibinfo {year} {1997}),\ 10.48550/ARXIV.QUANT-PH/9712048}\BibitemShut
  {NoStop}%
\bibitem [{\citenamefont {Aharonov}\ and\ \citenamefont
  {Ben-Or}(1997)}]{aharonov1997fault}%
  \BibitemOpen
  \bibfield  {author} {\bibinfo {author} {\bibfnamefont {Dorit}\ \bibnamefont
  {Aharonov}}\ and\ \bibinfo {author} {\bibfnamefont {Michael}\ \bibnamefont
  {Ben-Or}},\ }\bibfield  {title} {\enquote {\bibinfo {title} {Fault-tolerant
  quantum computation with constant error},}\ }in\ \href@noop {} {\emph
  {\bibinfo {booktitle} {Proceedings of the twenty-ninth annual ACM symposium
  on Theory of computing}}}\ (\bibinfo {year} {1997})\ pp.\ \bibinfo {pages}
  {176--188}\BibitemShut {NoStop}%
\bibitem [{\citenamefont {Knill}\ \emph {et~al.}(1998)\citenamefont {Knill},
  \citenamefont {Laflamme},\ and\ \citenamefont {Zurek}}]{knill1998resilient}%
  \BibitemOpen
  \bibfield  {author} {\bibinfo {author} {\bibfnamefont {Emanuel}\ \bibnamefont
  {Knill}}, \bibinfo {author} {\bibfnamefont {Raymond}\ \bibnamefont
  {Laflamme}}, \ and\ \bibinfo {author} {\bibfnamefont {Wojciech~H}\
  \bibnamefont {Zurek}},\ }\bibfield  {title} {\enquote {\bibinfo {title}
  {Resilient quantum computation},}\ }\href@noop {} {\bibfield  {journal}
  {\bibinfo  {journal} {Science}\ }\textbf {\bibinfo {volume} {279}},\ \bibinfo
  {pages} {342--345} (\bibinfo {year} {1998})}\BibitemShut {NoStop}%
\bibitem [{\citenamefont {Kitaev}(2003)}]{kitaev2003fault}%
  \BibitemOpen
  \bibfield  {author} {\bibinfo {author} {\bibfnamefont {A~Yu}\ \bibnamefont
  {Kitaev}},\ }\bibfield  {title} {\enquote {\bibinfo {title} {Fault-tolerant
  quantum computation by anyons},}\ }\href@noop {} {\bibfield  {journal}
  {\bibinfo  {journal} {Annals of physics}\ }\textbf {\bibinfo {volume}
  {303}},\ \bibinfo {pages} {2--30} (\bibinfo {year} {2003})}\BibitemShut
  {NoStop}%
\bibitem [{\citenamefont {Nielsen}\ and\ \citenamefont {Chuang}(2010)}]{MI}%
  \BibitemOpen
  \bibfield  {author} {\bibinfo {author} {\bibfnamefont {Michael~A.}\
  \bibnamefont {Nielsen}}\ and\ \bibinfo {author} {\bibfnamefont {Isaac~L.}\
  \bibnamefont {Chuang}},\ }\href {\doibase 10.1017/CBO9780511976667} {\emph
  {\bibinfo {title} {Quantum Computation and Quantum Information: 10th
  Anniversary Edition}}}\ (\bibinfo  {publisher} {Cambridge University Press},\
  \bibinfo {year} {2010})\BibitemShut {NoStop}%
\bibitem [{\citenamefont {Watrous}(2018)}]{watrous2018theory}%
  \BibitemOpen
  \bibfield  {author} {\bibinfo {author} {\bibfnamefont {John}\ \bibnamefont
  {Watrous}},\ }\href@noop {} {\emph {\bibinfo {title} {The theory of quantum
  information}}}\ (\bibinfo  {publisher} {Cambridge university press},\
  \bibinfo {year} {2018})\BibitemShut {NoStop}%
\bibitem [{\citenamefont {Gottesman}(1997)}]{gottesman1997stabilizer}%
  \BibitemOpen
  \bibfield  {author} {\bibinfo {author} {\bibfnamefont {Daniel}\ \bibnamefont
  {Gottesman}},\ }\href@noop {} {\emph {\bibinfo {title} {Stabilizer codes and
  quantum error correction}}}\ (\bibinfo  {publisher} {California Institute of
  Technology},\ \bibinfo {year} {1997})\BibitemShut {NoStop}%
\bibitem [{\citenamefont {Temme}\ \emph {et~al.}(2017)\citenamefont {Temme},
  \citenamefont {Bravyi},\ and\ \citenamefont {Gambetta}}]{temme2017error}%
  \BibitemOpen
  \bibfield  {author} {\bibinfo {author} {\bibfnamefont {Kristan}\ \bibnamefont
  {Temme}}, \bibinfo {author} {\bibfnamefont {Sergey}\ \bibnamefont {Bravyi}},
  \ and\ \bibinfo {author} {\bibfnamefont {Jay~M}\ \bibnamefont {Gambetta}},\
  }\bibfield  {title} {\enquote {\bibinfo {title} {Error mitigation for
  short-depth quantum circuits},}\ }\href@noop {} {\bibfield  {journal}
  {\bibinfo  {journal} {Physical review letters}\ }\textbf {\bibinfo {volume}
  {119}},\ \bibinfo {pages} {180509} (\bibinfo {year} {2017})}\BibitemShut
  {NoStop}%
\bibitem [{\citenamefont {Endo}\ \emph {et~al.}(2018)\citenamefont {Endo},
  \citenamefont {Benjamin},\ and\ \citenamefont {Li}}]{endo2018practical}%
  \BibitemOpen
  \bibfield  {author} {\bibinfo {author} {\bibfnamefont {Suguru}\ \bibnamefont
  {Endo}}, \bibinfo {author} {\bibfnamefont {Simon~C}\ \bibnamefont
  {Benjamin}}, \ and\ \bibinfo {author} {\bibfnamefont {Ying}\ \bibnamefont
  {Li}},\ }\bibfield  {title} {\enquote {\bibinfo {title} {Practical quantum
  error mitigation for near-future applications},}\ }\href@noop {} {\bibfield
  {journal} {\bibinfo  {journal} {Physical Review X}\ }\textbf {\bibinfo
  {volume} {8}},\ \bibinfo {pages} {031027} (\bibinfo {year}
  {2018})}\BibitemShut {NoStop}%
\bibitem [{\citenamefont {McArdle}\ \emph {et~al.}(2019)\citenamefont
  {McArdle}, \citenamefont {Yuan},\ and\ \citenamefont
  {Benjamin}}]{mcardle2019error}%
  \BibitemOpen
  \bibfield  {author} {\bibinfo {author} {\bibfnamefont {Sam}\ \bibnamefont
  {McArdle}}, \bibinfo {author} {\bibfnamefont {Xiao}\ \bibnamefont {Yuan}}, \
  and\ \bibinfo {author} {\bibfnamefont {Simon}\ \bibnamefont {Benjamin}},\
  }\bibfield  {title} {\enquote {\bibinfo {title} {Error-mitigated digital
  quantum simulation},}\ }\href@noop {} {\bibfield  {journal} {\bibinfo
  {journal} {Physical review letters}\ }\textbf {\bibinfo {volume} {122}},\
  \bibinfo {pages} {180501} (\bibinfo {year} {2019})}\BibitemShut {NoStop}%
\bibitem [{\citenamefont {Maciejewski}\ \emph {et~al.}(2020)\citenamefont
  {Maciejewski}, \citenamefont {Zimbor{\'a}s},\ and\ \citenamefont
  {Oszmaniec}}]{maciejewski2020mitigation}%
  \BibitemOpen
  \bibfield  {author} {\bibinfo {author} {\bibfnamefont {Filip~B}\ \bibnamefont
  {Maciejewski}}, \bibinfo {author} {\bibfnamefont {Zolt{\'a}n}\ \bibnamefont
  {Zimbor{\'a}s}}, \ and\ \bibinfo {author} {\bibfnamefont {Micha{\l}}\
  \bibnamefont {Oszmaniec}},\ }\bibfield  {title} {\enquote {\bibinfo {title}
  {Mitigation of readout noise in near-term quantum devices by classical
  post-processing based on detector tomography},}\ }\href@noop {} {\bibfield
  {journal} {\bibinfo  {journal} {Quantum}\ }\textbf {\bibinfo {volume} {4}},\
  \bibinfo {pages} {257} (\bibinfo {year} {2020})}\BibitemShut {NoStop}%
\bibitem [{\citenamefont {Koczor}(2021)}]{koczor2021exponential}%
  \BibitemOpen
  \bibfield  {author} {\bibinfo {author} {\bibfnamefont {B{\'a}lint}\
  \bibnamefont {Koczor}},\ }\bibfield  {title} {\enquote {\bibinfo {title}
  {Exponential error suppression for near-term quantum devices},}\ }\href@noop
  {} {\bibfield  {journal} {\bibinfo  {journal} {Physical Review X}\ }\textbf
  {\bibinfo {volume} {11}},\ \bibinfo {pages} {031057} (\bibinfo {year}
  {2021})}\BibitemShut {NoStop}%
\bibitem [{\citenamefont {Eisert}\ \emph {et~al.}(2020)\citenamefont {Eisert},
  \citenamefont {Hangleiter}, \citenamefont {Walk}, \citenamefont {Roth},
  \citenamefont {Markham}, \citenamefont {Parekh}, \citenamefont {Chabaud},\
  and\ \citenamefont {Kashefi}}]{eisert2020quantum}%
  \BibitemOpen
  \bibfield  {author} {\bibinfo {author} {\bibfnamefont {Jens}\ \bibnamefont
  {Eisert}}, \bibinfo {author} {\bibfnamefont {Dominik}\ \bibnamefont
  {Hangleiter}}, \bibinfo {author} {\bibfnamefont {Nathan}\ \bibnamefont
  {Walk}}, \bibinfo {author} {\bibfnamefont {Ingo}\ \bibnamefont {Roth}},
  \bibinfo {author} {\bibfnamefont {Damian}\ \bibnamefont {Markham}}, \bibinfo
  {author} {\bibfnamefont {Rhea}\ \bibnamefont {Parekh}}, \bibinfo {author}
  {\bibfnamefont {Ulysse}\ \bibnamefont {Chabaud}}, \ and\ \bibinfo {author}
  {\bibfnamefont {Elham}\ \bibnamefont {Kashefi}},\ }\bibfield  {title}
  {\enquote {\bibinfo {title} {Quantum certification and benchmarking},}\
  }\href@noop {} {\bibfield  {journal} {\bibinfo  {journal} {Nature Reviews
  Physics}\ }\textbf {\bibinfo {volume} {2}},\ \bibinfo {pages} {382--390}
  (\bibinfo {year} {2020})}\BibitemShut {NoStop}%
\bibitem [{\citenamefont {Emerson}\ \emph {et~al.}(2005)\citenamefont
  {Emerson}, \citenamefont {Alicki},\ and\ \citenamefont
  {{\.Z}yczkowski}}]{emerson2005scalable}%
  \BibitemOpen
  \bibfield  {author} {\bibinfo {author} {\bibfnamefont {Joseph}\ \bibnamefont
  {Emerson}}, \bibinfo {author} {\bibfnamefont {Robert}\ \bibnamefont
  {Alicki}}, \ and\ \bibinfo {author} {\bibfnamefont {Karol}\ \bibnamefont
  {{\.Z}yczkowski}},\ }\bibfield  {title} {\enquote {\bibinfo {title} {Scalable
  noise estimation with random unitary operators},}\ }\href@noop {} {\bibfield
  {journal} {\bibinfo  {journal} {Journal of Optics B: Quantum and
  Semiclassical Optics}\ }\textbf {\bibinfo {volume} {7}},\ \bibinfo {pages}
  {S347} (\bibinfo {year} {2005})}\BibitemShut {NoStop}%
\bibitem [{\citenamefont {Knill}\ \emph {et~al.}(2008)\citenamefont {Knill},
  \citenamefont {Leibfried}, \citenamefont {Reichle}, \citenamefont {Britton},
  \citenamefont {Blakestad}, \citenamefont {Jost}, \citenamefont {Langer},
  \citenamefont {Ozeri}, \citenamefont {Seidelin},\ and\ \citenamefont
  {Wineland}}]{knill2008randomized}%
  \BibitemOpen
  \bibfield  {author} {\bibinfo {author} {\bibfnamefont {Emanuel}\ \bibnamefont
  {Knill}}, \bibinfo {author} {\bibfnamefont {Dietrich}\ \bibnamefont
  {Leibfried}}, \bibinfo {author} {\bibfnamefont {Rolf}\ \bibnamefont
  {Reichle}}, \bibinfo {author} {\bibfnamefont {Joe}\ \bibnamefont {Britton}},
  \bibinfo {author} {\bibfnamefont {R~Brad}\ \bibnamefont {Blakestad}},
  \bibinfo {author} {\bibfnamefont {John~D}\ \bibnamefont {Jost}}, \bibinfo
  {author} {\bibfnamefont {Chris}\ \bibnamefont {Langer}}, \bibinfo {author}
  {\bibfnamefont {Roee}\ \bibnamefont {Ozeri}}, \bibinfo {author}
  {\bibfnamefont {Signe}\ \bibnamefont {Seidelin}}, \ and\ \bibinfo {author}
  {\bibfnamefont {David~J}\ \bibnamefont {Wineland}},\ }\bibfield  {title}
  {\enquote {\bibinfo {title} {Randomized benchmarking of quantum gates},}\
  }\href@noop {} {\bibfield  {journal} {\bibinfo  {journal} {Physical Review
  A}\ }\textbf {\bibinfo {volume} {77}},\ \bibinfo {pages} {012307} (\bibinfo
  {year} {2008})}\BibitemShut {NoStop}%
\bibitem [{\citenamefont {Dankert}\ \emph {et~al.}(2009)\citenamefont
  {Dankert}, \citenamefont {Cleve}, \citenamefont {Emerson},\ and\
  \citenamefont {Livine}}]{dankert2009exact}%
  \BibitemOpen
  \bibfield  {author} {\bibinfo {author} {\bibfnamefont {Christoph}\
  \bibnamefont {Dankert}}, \bibinfo {author} {\bibfnamefont {Richard}\
  \bibnamefont {Cleve}}, \bibinfo {author} {\bibfnamefont {Joseph}\
  \bibnamefont {Emerson}}, \ and\ \bibinfo {author} {\bibfnamefont {Etera}\
  \bibnamefont {Livine}},\ }\bibfield  {title} {\enquote {\bibinfo {title}
  {Exact and approximate unitary 2-designs and their application to fidelity
  estimation},}\ }\href@noop {} {\bibfield  {journal} {\bibinfo  {journal}
  {Physical Review A}\ }\textbf {\bibinfo {volume} {80}},\ \bibinfo {pages}
  {012304} (\bibinfo {year} {2009})}\BibitemShut {NoStop}%
\bibitem [{\citenamefont {Magesan}\ \emph {et~al.}(2011)\citenamefont
  {Magesan}, \citenamefont {Gambetta},\ and\ \citenamefont
  {Emerson}}]{magesan2011scalable}%
  \BibitemOpen
  \bibfield  {author} {\bibinfo {author} {\bibfnamefont {Easwar}\ \bibnamefont
  {Magesan}}, \bibinfo {author} {\bibfnamefont {Jay~M}\ \bibnamefont
  {Gambetta}}, \ and\ \bibinfo {author} {\bibfnamefont {Joseph}\ \bibnamefont
  {Emerson}},\ }\bibfield  {title} {\enquote {\bibinfo {title} {Scalable and
  robust randomized benchmarking of quantum processes},}\ }\href@noop {}
  {\bibfield  {journal} {\bibinfo  {journal} {Physical review letters}\
  }\textbf {\bibinfo {volume} {106}},\ \bibinfo {pages} {180504} (\bibinfo
  {year} {2011})}\BibitemShut {NoStop}%
\bibitem [{\citenamefont {Magesan}\ \emph {et~al.}(2012)\citenamefont
  {Magesan}, \citenamefont {Gambetta}, \citenamefont {Johnson}, \citenamefont
  {Ryan}, \citenamefont {Chow}, \citenamefont {Merkel}, \citenamefont
  {Da~Silva}, \citenamefont {Keefe}, \citenamefont {Rothwell}, \citenamefont
  {Ohki} \emph {et~al.}}]{magesan2012efficient}%
  \BibitemOpen
  \bibfield  {author} {\bibinfo {author} {\bibfnamefont {Easwar}\ \bibnamefont
  {Magesan}}, \bibinfo {author} {\bibfnamefont {Jay~M}\ \bibnamefont
  {Gambetta}}, \bibinfo {author} {\bibfnamefont {Blake~R}\ \bibnamefont
  {Johnson}}, \bibinfo {author} {\bibfnamefont {Colm~A}\ \bibnamefont {Ryan}},
  \bibinfo {author} {\bibfnamefont {Jerry~M}\ \bibnamefont {Chow}}, \bibinfo
  {author} {\bibfnamefont {Seth~T}\ \bibnamefont {Merkel}}, \bibinfo {author}
  {\bibfnamefont {Marcus~P}\ \bibnamefont {Da~Silva}}, \bibinfo {author}
  {\bibfnamefont {George~A}\ \bibnamefont {Keefe}}, \bibinfo {author}
  {\bibfnamefont {Mary~B}\ \bibnamefont {Rothwell}}, \bibinfo {author}
  {\bibfnamefont {Thomas~A}\ \bibnamefont {Ohki}},  \emph {et~al.},\ }\bibfield
   {title} {\enquote {\bibinfo {title} {Efficient measurement of quantum gate
  error by interleaved randomized benchmarking},}\ }\href@noop {} {\bibfield
  {journal} {\bibinfo  {journal} {Physical review letters}\ }\textbf {\bibinfo
  {volume} {109}},\ \bibinfo {pages} {080505} (\bibinfo {year}
  {2012})}\BibitemShut {NoStop}%
\bibitem [{\citenamefont {Fran{\c{c}}a}\ and\ \citenamefont
  {Hashagen}(2018)}]{francca2018approximate}%
  \BibitemOpen
  \bibfield  {author} {\bibinfo {author} {\bibfnamefont {Daniel~Stilck}\
  \bibnamefont {Fran{\c{c}}a}}\ and\ \bibinfo {author} {\bibfnamefont
  {AK}~\bibnamefont {Hashagen}},\ }\bibfield  {title} {\enquote {\bibinfo
  {title} {Approximate randomized benchmarking for finite groups},}\
  }\href@noop {} {\bibfield  {journal} {\bibinfo  {journal} {Journal of Physics
  A: Mathematical and Theoretical}\ }\textbf {\bibinfo {volume} {51}},\
  \bibinfo {pages} {395302} (\bibinfo {year} {2018})}\BibitemShut {NoStop}%
\bibitem [{\citenamefont {Erhard}\ \emph {et~al.}(2019)\citenamefont {Erhard},
  \citenamefont {Wallman}, \citenamefont {Postler}, \citenamefont {Meth},
  \citenamefont {Stricker}, \citenamefont {Martinez}, \citenamefont
  {Schindler}, \citenamefont {Monz}, \citenamefont {Emerson},\ and\
  \citenamefont {Blatt}}]{erhard2019characterizing}%
  \BibitemOpen
  \bibfield  {author} {\bibinfo {author} {\bibfnamefont {Alexander}\
  \bibnamefont {Erhard}}, \bibinfo {author} {\bibfnamefont {Joel~J}\
  \bibnamefont {Wallman}}, \bibinfo {author} {\bibfnamefont {Lukas}\
  \bibnamefont {Postler}}, \bibinfo {author} {\bibfnamefont {Michael}\
  \bibnamefont {Meth}}, \bibinfo {author} {\bibfnamefont {Roman}\ \bibnamefont
  {Stricker}}, \bibinfo {author} {\bibfnamefont {Esteban~A}\ \bibnamefont
  {Martinez}}, \bibinfo {author} {\bibfnamefont {Philipp}\ \bibnamefont
  {Schindler}}, \bibinfo {author} {\bibfnamefont {Thomas}\ \bibnamefont
  {Monz}}, \bibinfo {author} {\bibfnamefont {Joseph}\ \bibnamefont {Emerson}},
  \ and\ \bibinfo {author} {\bibfnamefont {Rainer}\ \bibnamefont {Blatt}},\
  }\bibfield  {title} {\enquote {\bibinfo {title} {Characterizing large-scale
  quantum computers via cycle benchmarking},}\ }\href@noop {} {\bibfield
  {journal} {\bibinfo  {journal} {Nature communications}\ }\textbf {\bibinfo
  {volume} {10}},\ \bibinfo {pages} {5347} (\bibinfo {year}
  {2019})}\BibitemShut {NoStop}%
\bibitem [{\citenamefont {Helsen}\ \emph {et~al.}(2019)\citenamefont {Helsen},
  \citenamefont {Xue}, \citenamefont {Vandersypen},\ and\ \citenamefont
  {Wehner}}]{helsen2019new}%
  \BibitemOpen
  \bibfield  {author} {\bibinfo {author} {\bibfnamefont {Jonas}\ \bibnamefont
  {Helsen}}, \bibinfo {author} {\bibfnamefont {Xiao}\ \bibnamefont {Xue}},
  \bibinfo {author} {\bibfnamefont {Lieven~MK}\ \bibnamefont {Vandersypen}}, \
  and\ \bibinfo {author} {\bibfnamefont {Stephanie}\ \bibnamefont {Wehner}},\
  }\bibfield  {title} {\enquote {\bibinfo {title} {A new class of efficient
  randomized benchmarking protocols},}\ }\href@noop {} {\bibfield  {journal}
  {\bibinfo  {journal} {npj Quantum Information}\ }\textbf {\bibinfo {volume}
  {5}},\ \bibinfo {pages} {71} (\bibinfo {year} {2019})}\BibitemShut {NoStop}%
\bibitem [{\citenamefont {Helsen}\ \emph {et~al.}(2022)\citenamefont {Helsen},
  \citenamefont {Roth}, \citenamefont {Onorati}, \citenamefont {Werner},\ and\
  \citenamefont {Eisert}}]{RB}%
  \BibitemOpen
  \bibfield  {author} {\bibinfo {author} {\bibfnamefont {J.}~\bibnamefont
  {Helsen}}, \bibinfo {author} {\bibfnamefont {I.}~\bibnamefont {Roth}},
  \bibinfo {author} {\bibfnamefont {E.}~\bibnamefont {Onorati}}, \bibinfo
  {author} {\bibfnamefont {A.H.}\ \bibnamefont {Werner}}, \ and\ \bibinfo
  {author} {\bibfnamefont {J.}~\bibnamefont {Eisert}},\ }\bibfield  {title}
  {\enquote {\bibinfo {title} {General framework for randomized
  benchmarking},}\ }\href {\doibase 10.1103/PRXQuantum.3.020357} {\bibfield
  {journal} {\bibinfo  {journal} {PRX Quantum}\ }\textbf {\bibinfo {volume}
  {3}},\ \bibinfo {pages} {020357} (\bibinfo {year} {2022})}\BibitemShut
  {NoStop}%
\bibitem [{\citenamefont {Proctor}\ \emph {et~al.}(2019)\citenamefont
  {Proctor}, \citenamefont {Carignan-Dugas}, \citenamefont {Rudinger},
  \citenamefont {Nielsen}, \citenamefont {Blume-Kohout},\ and\ \citenamefont
  {Young}}]{proctor2019direct}%
  \BibitemOpen
  \bibfield  {author} {\bibinfo {author} {\bibfnamefont {Timothy~J}\
  \bibnamefont {Proctor}}, \bibinfo {author} {\bibfnamefont {Arnaud}\
  \bibnamefont {Carignan-Dugas}}, \bibinfo {author} {\bibfnamefont {Kenneth}\
  \bibnamefont {Rudinger}}, \bibinfo {author} {\bibfnamefont {Erik}\
  \bibnamefont {Nielsen}}, \bibinfo {author} {\bibfnamefont {Robin}\
  \bibnamefont {Blume-Kohout}}, \ and\ \bibinfo {author} {\bibfnamefont
  {Kevin}\ \bibnamefont {Young}},\ }\bibfield  {title} {\enquote {\bibinfo
  {title} {Direct randomized benchmarking for multiqubit devices},}\
  }\href@noop {} {\bibfield  {journal} {\bibinfo  {journal} {Physical review
  letters}\ }\textbf {\bibinfo {volume} {123}},\ \bibinfo {pages} {030503}
  (\bibinfo {year} {2019})}\BibitemShut {NoStop}%
\bibitem [{\citenamefont {Merkel}\ \emph {et~al.}(2021)\citenamefont {Merkel},
  \citenamefont {Pritchett},\ and\ \citenamefont
  {Fong}}]{merkel2021randomized}%
  \BibitemOpen
  \bibfield  {author} {\bibinfo {author} {\bibfnamefont {Seth~T}\ \bibnamefont
  {Merkel}}, \bibinfo {author} {\bibfnamefont {Emily~J}\ \bibnamefont
  {Pritchett}}, \ and\ \bibinfo {author} {\bibfnamefont {Bryan~H}\ \bibnamefont
  {Fong}},\ }\bibfield  {title} {\enquote {\bibinfo {title} {Randomized
  benchmarking as convolution: Fourier analysis of gate dependent errors},}\
  }\href@noop {} {\bibfield  {journal} {\bibinfo  {journal} {Quantum}\ }\textbf
  {\bibinfo {volume} {5}},\ \bibinfo {pages} {581} (\bibinfo {year}
  {2021})}\BibitemShut {NoStop}%
\bibitem [{\citenamefont {Beale}\ and\ \citenamefont
  {Wallman}(2023)}]{beale2023randomized}%
  \BibitemOpen
  \bibfield  {author} {\bibinfo {author} {\bibfnamefont {Stefanie~J.}\
  \bibnamefont {Beale}}\ and\ \bibinfo {author} {\bibfnamefont {Joel~J.}\
  \bibnamefont {Wallman}},\ }\href@noop {} {\enquote {\bibinfo {title}
  {Randomized compiling for subsystem measurements},}\ } (\bibinfo {year}
  {2023}),\ \Eprint {http://arxiv.org/abs/2304.06599} {arXiv:2304.06599
  [quant-ph]} \BibitemShut {NoStop}%
\bibitem [{\citenamefont {Graydon}\ \emph {et~al.}(2022)\citenamefont
  {Graydon}, \citenamefont {Skanes-Norman},\ and\ \citenamefont
  {Wallman}}]{Graydon2022}%
  \BibitemOpen
  \bibfield  {author} {\bibinfo {author} {\bibfnamefont {Matthew~A.}\
  \bibnamefont {Graydon}}, \bibinfo {author} {\bibfnamefont {Joshua}\
  \bibnamefont {Skanes-Norman}}, \ and\ \bibinfo {author} {\bibfnamefont
  {Joel~J.}\ \bibnamefont {Wallman}},\ }\bibfield  {title} {\enquote {\bibinfo
  {title} {Designing stochastic channels},}\ }\href@noop {} {\  (\bibinfo
  {year} {2022})},\ \Eprint {http://arxiv.org/abs/2201.07156} {arXiv:2201.07156
  [quant-ph]} \BibitemShut {NoStop}%
\bibitem [{\citenamefont {Davies}\ and\ \citenamefont
  {Lewis}(1970)}]{davies1970operational}%
  \BibitemOpen
  \bibfield  {author} {\bibinfo {author} {\bibfnamefont {E~Brian}\ \bibnamefont
  {Davies}}\ and\ \bibinfo {author} {\bibfnamefont {John~T}\ \bibnamefont
  {Lewis}},\ }\bibfield  {title} {\enquote {\bibinfo {title} {An operational
  approach to quantum probability},}\ }\href@noop {} {\bibfield  {journal}
  {\bibinfo  {journal} {Communications in Mathematical Physics}\ }\textbf
  {\bibinfo {volume} {17}},\ \bibinfo {pages} {239--260} (\bibinfo {year}
  {1970})}\BibitemShut {NoStop}%
\bibitem [{\citenamefont {Heinosaari}\ and\ \citenamefont
  {Ziman}(2011)}]{heinosaari2011mathematical}%
  \BibitemOpen
  \bibfield  {author} {\bibinfo {author} {\bibfnamefont {Teiko}\ \bibnamefont
  {Heinosaari}}\ and\ \bibinfo {author} {\bibfnamefont {M{\'a}rio}\
  \bibnamefont {Ziman}},\ }\href@noop {} {\emph {\bibinfo {title} {The
  mathematical language of quantum theory: from uncertainty to entanglement}}}\
  (\bibinfo  {publisher} {Cambridge University Press},\ \bibinfo {year}
  {2011})\BibitemShut {NoStop}%
\bibitem [{\citenamefont {Wilde}(2013)}]{wilde2013quantum}%
  \BibitemOpen
  \bibfield  {author} {\bibinfo {author} {\bibfnamefont {Mark~M}\ \bibnamefont
  {Wilde}},\ }\href@noop {} {\emph {\bibinfo {title} {Quantum information
  theory}}}\ (\bibinfo  {publisher} {Cambridge University Press},\ \bibinfo
  {year} {2013})\BibitemShut {NoStop}%
\bibitem [{\citenamefont {McLaren}(2022)}]{mclaren2022}%
  \BibitemOpen
  \bibfield  {author} {\bibinfo {author} {\bibfnamefont {Darian}\ \bibnamefont
  {McLaren}},\ }\emph {\bibinfo {title} {On the evaluation of quantum
  instruments with a consideration to measurements in trapped ion systems}},\
  \href@noop {} {Master's thesis},\ \bibinfo  {school} {University of Waterloo}
  (\bibinfo {year} {2022})\BibitemShut {NoStop}%
\bibitem [{\citenamefont {Mahmoud}\ \emph {et~al.}(2023)\citenamefont
  {Mahmoud}, \citenamefont {Graydon},\ and\ \citenamefont
  {Wallman}}]{Mahmoud2023}%
  \BibitemOpen
  \bibfield  {author} {\bibinfo {author} {\bibfnamefont {Ali~A.}\ \bibnamefont
  {Mahmoud}}, \bibinfo {author} {\bibfnamefont {Matthew}\ \bibnamefont
  {Graydon}}, \ and\ \bibinfo {author} {\bibfnamefont {Joel~J.}\ \bibnamefont
  {Wallman}},\ }\href@noop {} {\enquote {\bibinfo {title} {Instrument
  benchmarking},}\ } (\bibinfo {year} {2023}),\ \bibinfo {note}
  {forthcoming}\BibitemShut {NoStop}%
\bibitem [{\citenamefont {Stricker}\ \emph {et~al.}(2022)\citenamefont
  {Stricker}, \citenamefont {Vodola}, \citenamefont {Erhard}, \citenamefont
  {Postler}, \citenamefont {Meth}, \citenamefont {Ringbauer}, \citenamefont
  {Schindler}, \citenamefont {Blatt}, \citenamefont {M{\"u}ller},\ and\
  \citenamefont {Monz}}]{stricker2022characterizing}%
  \BibitemOpen
  \bibfield  {author} {\bibinfo {author} {\bibfnamefont {Roman}\ \bibnamefont
  {Stricker}}, \bibinfo {author} {\bibfnamefont {Davide}\ \bibnamefont
  {Vodola}}, \bibinfo {author} {\bibfnamefont {Alexander}\ \bibnamefont
  {Erhard}}, \bibinfo {author} {\bibfnamefont {Lukas}\ \bibnamefont {Postler}},
  \bibinfo {author} {\bibfnamefont {Michael}\ \bibnamefont {Meth}}, \bibinfo
  {author} {\bibfnamefont {Martin}\ \bibnamefont {Ringbauer}}, \bibinfo
  {author} {\bibfnamefont {Philipp}\ \bibnamefont {Schindler}}, \bibinfo
  {author} {\bibfnamefont {Rainer}\ \bibnamefont {Blatt}}, \bibinfo {author}
  {\bibfnamefont {Markus}\ \bibnamefont {M{\"u}ller}}, \ and\ \bibinfo {author}
  {\bibfnamefont {Thomas}\ \bibnamefont {Monz}},\ }\bibfield  {title} {\enquote
  {\bibinfo {title} {Characterizing quantum instruments: From nondemolition
  measurements to quantum error correction},}\ }\href@noop {} {\bibfield
  {journal} {\bibinfo  {journal} {PRX Quantum}\ }\textbf {\bibinfo {volume}
  {3}},\ \bibinfo {pages} {030318} (\bibinfo {year} {2022})}\BibitemShut
  {NoStop}%
\bibitem [{\citenamefont {Choi}(1975)}]{CHOI1975285}%
  \BibitemOpen
  \bibfield  {author} {\bibinfo {author} {\bibfnamefont {Man-Duen}\
  \bibnamefont {Choi}},\ }\bibfield  {title} {\enquote {\bibinfo {title}
  {Completely positive linear maps on complex matrices},}\ }\href {\doibase
  https://doi.org/10.1016/0024-3795(75)90075-0} {\bibfield  {journal} {\bibinfo
   {journal} {Linear Algebra and its Applications}\ }\textbf {\bibinfo {volume}
  {10}},\ \bibinfo {pages} {285--290} (\bibinfo {year} {1975})}\BibitemShut
  {NoStop}%
\bibitem [{\citenamefont {Jamiołkowski}(1972)}]{JAMIOLKOWSKI1972275}%
  \BibitemOpen
  \bibfield  {author} {\bibinfo {author} {\bibfnamefont {A.}~\bibnamefont
  {Jamiołkowski}},\ }\bibfield  {title} {\enquote {\bibinfo {title} {Linear
  transformations which preserve trace and positive semidefiniteness of
  operators},}\ }\href {\doibase https://doi.org/10.1016/0034-4877(72)90011-0}
  {\bibfield  {journal} {\bibinfo  {journal} {Reports on Mathematical Physics}\
  }\textbf {\bibinfo {volume} {3}},\ \bibinfo {pages} {275--278} (\bibinfo
  {year} {1972})}\BibitemShut {NoStop}%
\bibitem [{\citenamefont {Schumacher}(1996)}]{Schumacher1996}%
  \BibitemOpen
  \bibfield  {author} {\bibinfo {author} {\bibfnamefont {Benjamin}\
  \bibnamefont {Schumacher}},\ }\bibfield  {title} {\enquote {\bibinfo {title}
  {Sending entanglement through noisy quantum channels},}\ }\href {\doibase
  10.1103/PhysRevA.54.2614} {\bibfield  {journal} {\bibinfo  {journal} {Phys.
  Rev. A}\ }\textbf {\bibinfo {volume} {54}},\ \bibinfo {pages} {2614--2628}
  (\bibinfo {year} {1996})}\BibitemShut {NoStop}%
\bibitem [{\citenamefont {Wallman}\ and\ \citenamefont
  {Flammia}(2014{\natexlab{a}})}]{RBCon}%
  \BibitemOpen
  \bibfield  {author} {\bibinfo {author} {\bibfnamefont {Joel~J}\ \bibnamefont
  {Wallman}}\ and\ \bibinfo {author} {\bibfnamefont {Steven~T}\ \bibnamefont
  {Flammia}},\ }\bibfield  {title} {\enquote {\bibinfo {title} {Randomized
  benchmarking with confidence},}\ }\href {\doibase
  10.1088/1367-2630/16/10/103032} {\bibfield  {journal} {\bibinfo  {journal}
  {New Journal of Physics}\ }\textbf {\bibinfo {volume} {16}},\ \bibinfo
  {pages} {103032} (\bibinfo {year} {2014}{\natexlab{a}})}\BibitemShut
  {NoStop}%
\bibitem [{\citenamefont {Wallman}\ and\ \citenamefont
  {Flammia}(2014{\natexlab{b}})}]{Wallman2014}%
  \BibitemOpen
  \bibfield  {author} {\bibinfo {author} {\bibfnamefont {Joel~J}\ \bibnamefont
  {Wallman}}\ and\ \bibinfo {author} {\bibfnamefont {Steven~T}\ \bibnamefont
  {Flammia}},\ }\bibfield  {title} {\enquote {\bibinfo {title} {Randomized
  benchmarking with confidence},}\ }\href {\doibase
  10.1088/1367-2630/16/10/103032} {\bibfield  {journal} {\bibinfo  {journal}
  {New Journal of Physics}\ }\textbf {\bibinfo {volume} {16}},\ \bibinfo
  {pages} {103032} (\bibinfo {year} {2014}{\natexlab{b}})}\BibitemShut
  {NoStop}%
\bibitem [{\citenamefont {Fuchs}\ and\ \citenamefont {van~de
  Graaf}(1999)}]{Fuchs1999}%
  \BibitemOpen
  \bibfield  {author} {\bibinfo {author} {\bibfnamefont {Christopher~A.}\
  \bibnamefont {Fuchs}}\ and\ \bibinfo {author} {\bibfnamefont {Jeroen}\
  \bibnamefont {van~de Graaf}},\ }\bibfield  {title} {\enquote {\bibinfo
  {title} {{Cryptographic distinguishability measures for quantum-mechanical
  states}},}\ }\href {\doibase 10.1109/18.761271} {\bibfield  {journal}
  {\bibinfo  {journal} {IEEE Transactions on Information Theory}\ }\textbf
  {\bibinfo {volume} {45}},\ \bibinfo {pages} {1216--1227} (\bibinfo {year}
  {1999})}\BibitemShut {NoStop}%
\end{thebibliography}%

\appendix

\section{Improved Fuchs-van de Graaf inequality}
\label{sec:Fuchs}
Quantum states are frequently compared by either the trace distance or the fidelity.
These two quantities are related to each other via the Fuchs-van de Graaf inequality~\cite{Fuchs1999}.
The Fuchs-van de Graaf inequality has a tighter form when one of the states is a pure state, which is in part due to the fact that the fidelity between a pure and a mixed state is just the trace inner product between them.
A natural approach to proving \cref{thm:stochasticDiamondDistance}, following proofs for unitary channels, would then be to apply the Fuchs-van de Graaf inequality to the Choi states of the ideal and noisy quantum instruments.
As the Choi state of a unitary channel is a pure state by \cref{eq:ChoiState} (since it only has one Kraus operator), a comparison of an implementation of the unitary channel to its ideal can thus make use of this improved bound.
However, the Choi state of a quantum instrument is generally not a pure state and so we cannot use the improved lower bound.
In trying this approach, we obtained the following improved lower bound on the Fuchs-van de Graaf inequality which may be of independent interest.
For clarity, we do not translate the lower bound into a bound on fidelity, but rather leave it as a trace inner product.

\begin{lemma}[Fuchs-van de Graaf Inequality]\label{thm:Fuchs}
For any two density operators $\rho, \sigma \in \bb{D}_d$ and any projector $\pi$ such that $\pi \rho = \rho$, the trace distance satisfies
\begin{align*}
    1 - \tr \pi \sigma \leq \frac{1}{2} \left\|\rho - \sigma\right\|_1\leq \sqrt{1-\|\sqrt{\rho}\sqrt{\sigma}\|_1^2}.
\end{align*}
\end{lemma}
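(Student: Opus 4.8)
The statement packages two inequalities, and my plan is to prove them by entirely different means: the right-hand inequality is the textbook Fuchs--van de Graaf bound, whereas the left-hand inequality is the new ingredient and is where the projector hypothesis enters. I would dispatch the lower bound first, since it is short. The tool is the dual (variational) characterization of the trace distance,
\begin{align*}
    \frac{1}{2}\|\rho - \sigma\|_1 = \max_{0 \leq P \leq I} \tr\!\big[P(\rho - \sigma)\big],
\end{align*}
in which the maximum runs over all effects $0 \leq P \leq I$ and is attained by the projector onto the positive part of $\rho - \sigma$. Because every projector is a legitimate effect, evaluating the right-hand side at the (generally suboptimal) choice $P = \pi$ immediately gives $\frac{1}{2}\|\rho - \sigma\|_1 \geq \tr[\pi\rho] - \tr[\pi\sigma]$.

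The hypothesis $\pi\rho = \rho$ now does all the work: taking the trace of both sides gives $\tr \pi\rho = \tr\rho = 1$, so the right-hand side collapses to $1 - \tr\pi\sigma$, which is exactly the claimed lower bound. Conceptually, the assumption that the support of $\rho$ lies in the range of $\pi$ pins the ``correct-detection'' term $\tr\pi\rho$ at its maximal value $1$; this is the feature that makes the bound useful precisely in the near-pure, known-support regime relevant to comparing an ideal instrument's Choi data to its noisy counterpart.

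For the upper bound I would simply invoke the standard Fuchs--van de Graaf inequality $\frac{1}{2}\|\rho - \sigma\|_1 \leq \sqrt{1 - \|\sqrt{\rho}\sqrt{\sigma}\|_1^2}$, being careful only to match conventions: the quantity $\|\sqrt{\rho}\sqrt{\sigma}\|_1$ is the square-root fidelity, and the fidelity of \cref{eq:Fidelity} is its square, so no rescaling is needed. There is no serious technical obstacle here; the only point requiring care is recognizing that the projector hypothesis is exactly what lets one sharpen the usual fidelity-based lower bound $1 - \|\sqrt{\rho}\sqrt{\sigma}\|_1$ into the operationally transparent detection probability $1 - \tr\pi\sigma$, and checking that this is a genuine improvement (rather than a weakening) for the minimal support projector when $\rho$ is close to pure.
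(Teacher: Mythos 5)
Your proposal is correct and follows essentially the same route as the paper: the lower bound comes from the variational characterization $\frac{1}{2}\|\rho-\sigma\|_1 \geq \max_{0\leq P\leq I}\tr\left[P(\rho-\sigma)\right]$ evaluated at the suboptimal effect $P=\pi$, with $\pi\rho=\rho$ and $\tr\rho=1$ collapsing $\tr\pi\rho$ to $1$, while the upper bound is the standard Fuchs--van de Graaf inequality with the fidelity definition substituted. No gaps; your additional remarks on when the new lower bound improves on $1-\|\sqrt{\rho}\sqrt{\sigma}\|_1$ go slightly beyond the paper's proof but do not change the argument.
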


\begin{proof}
The upper bound is simply the standard upper bound from the Fuchs-van de Gaaf inequality with the definition of the fidelity substituted in.
Recall that for positive semi-definite matrices $A$ and $B$ we have the following identity for the trace norm~\cite{MI}
\begin{align}
    \frac{1}{2}\|A - B\|_1 \geq \max_{\pi\leq I} \tr\left(\pi(A - B)\right).
\end{align}
Then choosing $\pi$ such that $\pi \rho = \rho$ gives
\begin{align}
    \frac{1}{2}\|\rho - \sigma\|_1 \geq \tr \pi(\rho - \sigma),
\end{align}
giving the lower bound.
\end{proof}

\end{document}